\newcommand{\QC}{\mathfrak{Q}}
\newcommand{\h}{\mathcal{H}}
\renewcommand{\L}{\mathcal{L}}
\newcommand{\D}{\mathcal{D}}
\newcommand{\I}{\mathcal{I}}
\newcommand{\II}{\mathbb{I}}
\newcommand{\M}{\mathds{M}}
\newcommand{\HH}{\mathbf{H}}
\newcommand{\LL}{\mathbf{L}}
\newcommand{\PP}{\mathbf{P}}
\newcommand{\T}{\mathrm{T}}
\newcommand{\dd}{\mathrm{d}}
\newcommand{\id}{\mathbf{I}}
\newcommand{\J}{\mathcal{J}}
\newcommand{\JJ}{\mathbb{J}}
\newcommand{\vl}{\mathrm{V2L}}
\newcommand{\lv}{\mathrm{L2V}}
\newcommand{\cq}{\mathrm{cq}}
\newcommand{\e}{\mathrm{e}}
\newcommand{\tr}{\mathrm{tr}}
\newcommand{\spn}{\mathrm{span}}
\newcommand{\diag}{\mathrm{diag}}
\newcommand{\x}{\mathbf{x}}
\newcommand{\ntl}{\mathrm{U}\,}
\newcommand{\mnt}{\mathrm{mnt}}
\newcommand{\true}{\texttt{true}}
\newcommand{\false}{\texttt{false}}
\renewcommand{\e}{\mathcal{E}}
\newtheorem{conjecture}[theorem]{Conjecture}
\algnewcommand\algorithmiccom{\textbf{Complexity:}}
\algnewcommand\Com{\item[\algorithmiccom]}
\title{Model Checking Quantum Continuous-Time Markov Chains}
\titlerunning{Model Checking Quantum CTMCs}
\author{Ming Xu}{Shanghai Key Lab of Trustworthy Computing,
East China Normal University, China}{mxu@cs.ecnu.edu.cn}{}{}
\author{Jingyi Mei}{Shanghai Key Lab of Trustworthy Computing,
East China Normal University, China}{mjyecnu@163.com}{}{}
\author{Ji Guan}{State Key Lab of Computer Science, Institute of Software,
Chinese Academy of Sciences, China}{guanji1992@gmail.com}{}{}
\author{Nengkun Yu}{Centre for Quantum Software and Information,
University of Technology Sydney, Australia}{nengkunyu@gmail.com}{}{}
\authorrunning{M.~Xu, J.~Mei, J.~Guan, and N.~Yu}
\keywords{Model Checking, Formal Logic, Quantum Computing, Computer Algebra}
\begin{document}
	
\maketitle

\begin{abstract}
	Verifying quantum systems has attracted a lot of interests in the last decades.
	In this paper, we initialised the model checking of quantum continuous-time Markov chain (QCTMC).
	As a real-time system,
	we specify the temporal properties on QCTMC by signal temporal logic (STL).
	To effectively check the atomic propositions in STL,
	we develop a state-of-art real root isolation algorithm under Schanuel's conjecture;
	further, we check the general STL formula by interval operations with a bottom-up fashion,
	whose query complexity turns out to be linear in the size of the input formula
	by calling the real root isolation algorithm.
	A running example of an open quantum walk is provided to demonstrate our method.
\end{abstract}

\section{Introduction}
Aiming to study nature,
physicists use different mechanics depending on the scale of the objects
they are interested in.
Classical mechanics describes nature at macroscopic scale
(far larger than $10^{-9}$ meters),
while quantum mechanics is applied at microscopic scale (near or less than $10^{-9}$ meters). 
A particle at this level can be mathematically represented
by a normalised complex vector $\ket{s}$ in a Hilbert space $\h$. 
The time evolution of a single particle \emph{closed} system is described by
the Schr\"odinger equation 
\begin{equation}\label{eq:schrodinger}
	\frac{\dd\ket{s(t)}}{\dd t} = -\imath \HH\ket{s(t)}
\end{equation}
with some \emph{Hamiltonian} $\HH$ (a Hermitian matrix on $\h$),
where $\ket{s(t)}$ is the state of the system at time $t$. 
More practically,
an \emph{open} quantum system interacting with the surrounding environment need to be considered.
Suffering noises from the environment,
the state of the system cannot be completely known.
Thus a \emph{density operator} $\rho$ (positive semidefinite matrix with unit trace) on $\h$
is introduced to describe the uncertainty of the possible states: 
$\rho = \sum_{i\in I} p_i\op{s_i}{s_i}$,
where $\{(p_i,\ket{s_i})\}_{i\in I}$ is a mixed state
or an ensemble expressing that the quantum state is at $\ket{s_i}$ with probability $p_i$,
and $\bra{s_i}$ is the complex conjugate and transpose of $\ket{s_i}$.
In this case, the evolution is described by the Lindblad's master equation:
\begin{equation}\label{Lind}
	\frac{\dd \rho(t)}{\dd t}=\L(\rho(t))
\end{equation}
where $\rho(t)$ stands for the (possibly mixed) state of the system,
and $\L$ is a linear function of $\rho(t)$ (to be formally described in Subsection~\ref{S22}),
which is generally irreversible.

To reveal physical phenomenon,
physicists have intensively studied the properties of closed and open quantum systems
case by case in the last decades,
such as long-term behaviors~(e.g. \cite{GLT+10})
and stabilities~(e.g. \cite{CiT15}).
In recent years,
the computer science community has stepped into this field
and adopted model checking technique to study quantum systems~\cite{GNP06,GNP08}.
Specifically, the quantum systems can be simulated by some mathematical models
and a bulk of physical properties can be reformulated as formulas
in some temporal logic with atomic propositions of quantum interpretation.
In particular, a \emph{quantum discrete-time Markov chain} (QDTMC) $(\h,\e)$
has been introduced as quantum generalisations of classical discrete-time Markov chains (DTMC)
to model the evolution in Eq.~\eqref{Lind} in a single time unit:
\begin{equation}\label{eq:evolution}
	\rho(t+1)=\e(\rho(t))
\end{equation}
where $\e$ is a discretised quantum operation obtained from the Lindblad's master equation,
usually called a \emph{super-operator} in the field of quantum information and computation.
Several fundamental model checking-related problems
for QDTMCs have been studied in the literature,
including limiting states~\cite{Wol12,GFY18},
reachability~\cite{YFY+13,XHF21},
repeated reachability~\cite{FHT+17},
linear time properties~\cite{LiF15},
and persistence based on irreducible and periodic decomposition techniques~\cite{BaN11,GFY18}.
These techniques were equipped to solve real-world problems in several different areas.
For example, \cite{FYY13} proposed algorithms to
model checking quantum cryptographic protocols
against a quantum extension of probabilistic computation tree logic (PCTL);
and linear temporal logic (LTL) was adopted to
specify a bulk of properties of quantum many-body and statistical systems,
and corresponding model checking algorithm was developed~\cite{GFT+19}.
See~\cite{YiF19,YiF21} for the comprehensive review of this research line. 

However, to the best of our knowledge,
there is no work on model checking the quantum continuous-time system of Eq.~\eqref{Lind}.
In contrast, there are fruitful results in the classical counterpart,
which is usually modelled by a continuous-time Markov chain (CTMC). 

The seminal work on verifying CTMCs is Aziz et~al.'s paper~\cite{ASS+96,ASS+00}.
The authors introduced continuous stochastic logic (CSL) interpreted on CTMCs.
Roughly speaking, the syntax of CSL amounts to
that of PCTL plus the multiphase until formula
$\Phi_1 \ntl^{\I_1} \Phi_2 \cdots \ntl^{\I_K} \Phi_{K+1}$,
for some $K \ge 1$.
Because \cite{ASS+96} restricts probabilities in $\Pr_{>\texttt{c}}$
to $\texttt{c} \in \mathbb{Q}$,
they can show decidability of model checking for CSL
using number-theoretic analysis.
An approximate model checking algorithm for a reduced version of CSL
was provided by Baier et~al.~\cite{BKH99},
who restrict path formulas to binary until: $\Phi_1 \ntl^\I \Phi_2$.
Under this logic, they successfully applied efficient numerical techniques
for \emph{transient analysis}~\cite{BHH+03}
using \emph{uniformisation}~\cite{Ste94}.
The approximate algorithms have been extended
for multiphase until formulas
using \emph{stratification}~\cite{ZJN+11,ZJN+12}.
Xu~\textit{et~al.} considered the multi-phase until formulas
over the CTMC with rewards~\cite{XZJ+16}.
An integral-style algorithm was proposed to attack this problem,
whose effectiveness is ensured by number-theoretic results and algebraic methods.
Recently, continuous linear logic was introduced to specified on CTMCS,
whose decidability was established~\cite{GuY20}.
Most the above algorithms have been implemented in probabilistic model checkers,
like \textsl{PRISM}~\cite{KNP11}, \textsl{Storm}~\cite{DJK+17},
and \textsl{ePMC}~\cite{HLS+14}.

Unfortunately, these results from CTMCs cannot directly tackle the problem of
automatically verifying quantum continuous-time systems.
The main obstacle is that the state space of classical case is finite,
while the space of quantum states in even a finite-dimensional Hilbert space
is continuum (i.e., infinite).
In this paper, we cross the difficulty
by introducing quantum continuous-time Markov chains (QCTMC)
to model the evolution of quantum continuous-time systems in Eq.~\eqref{Lind}
and converting them into a distribution transformer
that preserves the laws of quantum mechanics.
Then, we consider a wide logic, signal temporal logic (STL),
to specify real-time properties against QCTMC.
The STL is more expressible than LTL and CTL.
Finally we present an exact method to decide the STL formula
using real root isolation and interval operations,
whose query complexity turns out to be linear in the size of the input formula
by calling the real root isolation routine.

The key contributions of the present paper are three-fold:
\begin{enumerate}
	\item In the field of formal verification,
		the model checking on DTMC, QDTMC and CTMC has been well studied in the past decades,
		but no work on checking QCTMCs as far as we know.
		The first contribution is filling this blank.
	\item In order to solve the atomic propositions in STL,
		we develop a state-of-the-art real root isolation algorithm
		for a rich class of real-valued functions based on Schanuel's conjecture.
	\item We provide a running example---%
		open quantum walk equipped with an absorbing boundary,
		which drops the restriction on the underlying graph being symmetric/Hermitian.
		The non-Hermitian structure brings real technical hardness.
		Fortunately, it is overcome by Eq.~\eqref{Lind}
		employed for describing the dynamical system of QCTMC.
\end{enumerate}

\subparagraph{Organisation}
The rest of the paper is structured as follows.
Section~\ref{S2} review some notions and notations
in quantum computing, the Lindblad's master equation and number theory.
In Sections~\ref{S3} and~\ref{S4},
we introduce the model of quantum continuous-time Markov chains
and the signal temporal logic (STL), respectively.
We solve the atomic propositions in STL in Section~\ref{S5},
and decide the general STL formulas in Section~\ref{S6}.
Section~\ref{S7} is the conclusion.

\section{Preliminaries}\label{S2}
Here we will recall some useful notions and notations from quantum computing~\cite{NiC00},
the Lindblad's master equation and number theory.

\subsection{Quantum Computing}
Let $\h$ be a Hilbert space with dimension $d$.
We employ the Dirac notations that are standard in quantum computing:
\begin{itemize}
	\item $\ket{s}$ stands for a unit column vector in $\h$ labelled with $s$;
	\item $\bra{s}:=\ket{s}^\dag$ is the Hermitian adjoint
	(complex conjugate and transpose) of $\ket{s}$;
	\item $\ip{s_1}{s_2}:=\bra{s_1}\ket{s_2}$
	is the inner product of $\ket{s_1}$ and $\ket{s_2}$;
	\item $\op{s_1}{s_2}:=\ket{s_1} \otimes \bra{s_2}$ is the outer product,
	where $\otimes$ denotes tensor product; and
	\item $\ket{s_1,s_2}:=\ket{s_1}\ket{s_2}$ is a shorthand of
    the product state $\ket{s_1}\otimes\ket{s_2}$.
\end{itemize}

A linear operator $\gamma$ is \emph{Hermitian} if $\gamma=\gamma^\dag$;
and it is \emph{positive}
if $\bra{s}\gamma\ket{s} \ge 0$ holds for any $\ket{s}\in\h$.
A \emph{projector} $\PP$ is a positive operator of
the form $\sum_{i=1}^m \op{s_i}{s_i}$ with $m\le d$,
where $\ket{s_i}$ are orthonormal.
Clearly, there is a bijective map
between projectors $\PP=\sum_{i=1}^m \op{s_i}{s_i}$
and subspaces of $\h$ that are spanned by $\{\ket{s_i}:1 \le i \le m\}$.
In sum, positive operators are Hermitian ones
whose eigenvalues are nonnegative;
and projectors are positive operators whose eigenvalues are $0$ or $1$.
Besides, a linear operator $\mathbf{U}$ is \emph{unitary}
if $\mathbf{U}\mathbf{U}^\dag=\mathbf{U}^\dag\mathbf{U}=\id$
where $\id$ is the \emph{identity} operator.

The \emph{trace} of a linear operator $\gamma$ is defined as
$\tr(\gamma):=\sum_{i=1}^d \bra{s_i}\gamma\ket{s_i}$
for any orthonormal basis $\{\ket{s_i}:1 \le i \le d\}$ of $\h$.
A \emph{density operator} $\rho$ on $\h$
is a positive operator with unit trace.
It gives rise to a generic way to describe quantum states:
if a density operator $\rho$ is $\op{s}{s}$ for some $\ket{s}\in \h$,
$\rho$ is said to be a \emph{pure} state;
otherwise it is a \emph{mixed} one,
i.e. $\rho=\sum_{i=1}^m p_i \op{s_i}{s_i}$ with $m\ge 2$ by spectral decomposition,
where $p_i$ are positive eigenvalues
(interpreted as the \emph{probabilities} of taking the pure states $\ket{s_i}$)
and their sum is $1$.
In other words, a pure state indicates the system state which we completely know;
a mixed state gives all possible system states, with total probability $1$, which we know.
We denote by $\D_\h$ the set of density operators on $\h$.
The subscript $\h$ of $\D_\h$ will be omitted if it is clear from the context. 

The system evolution between pure states is characterised by some unitary operator $\mathbf{U}$,
i.e. $\op{s(t)}{s(t)}=\mathbf{U}(t)\op{s(0)}{s(0)}\mathbf{U}^\dag(t)$
where $\mathbf{U}(t)$ comes from $\exp(-\imath \HH t)$
for the Hermitian operator $\HH$ in Eq.~\eqref{eq:schrodinger};
the system evolution between density operators (pure or mixed states) is characterised
by some completely positive operator-sum, a.k.a. \emph{super-operator},
i.e. $\rho(t)=\sum_{j=1}^m \LL_j(t) \rho(0) \LL_j^\dag(t)$
where $\LL_j$ are linear operators satisfying
the trace preservation $\sum_{j=1}^m \LL_j^\dag \LL_j=\id$.
The latter dynamical system is obtained in such a way:
an enlarged unitary operator acts on the purified composite state
$(\sum_{i=1}^m \sqrt{p_i}\ket{s_i,env_i})(\sum_{i=1}^m \sqrt{p_i}\bra{s_i,env_i})$
with $\rho(0)=\sum_{i=1}^m p_i \op{s_i}{s_i}$
for some orthonormal environment states $\ket{env_i}$~\cite[Section~2.5]{NiC00},
then the environment in the resulting composite state is traced out,
which turns out to be the aforementioned operator-sum form.
We call it an \emph{open} system as it interacts with the environment.
Whereas, the former dynamical system, independent from the environment,
is called a \emph{closed} system.
Open systems are more common than closed systems in practice.

\subsection{Lindblad's Master Equation}\label{S22}
To characterise state evolution of the continuous-time open system with the \emph{memoryless} property,
we employ the Lindblad's master equation~\cite{Lin76,GKS76} that is
\begin{equation}\label{eq:Lindblad}
	\rho'=\L(\rho)
	=-\imath\HH\rho+\imath\rho\HH
	+\sum_{j=1}^m \left(\LL_j\rho \LL_j^\dag
	-\tfrac{1}{2}\LL_j^\dag\LL_j\rho-\tfrac{1}{2}\rho\LL_j^\dag\LL_j\right),
\end{equation}
where $\HH$ is a Hermitian operator and $\LL_j$ are linear operators.
The terms $-\imath\HH\rho+\imath\rho\HH$ describe the evolution of the internal system;
the terms $\LL_j\rho \LL_j^\dag
-\tfrac{1}{2}\LL_j^\dag\LL_j\rho-\tfrac{1}{2}\rho\LL_j^\dag\LL_j$ 
describe the interaction between system and environment.
In other words,
to characterise the evolution of an open system,
it is necessary to use those linear operators $\LL_j$ besides the Hermitian operator $\HH$.
It is known to be the most general type of Markovian and time-homogeneous master equation
describing (in general non-unitary) evolution of the system state
that preserves the laws of quantum mechanics
(i.e., completely positive and trace-preserving for any initial condition).

In the following, we will derive the solution of Eq.~\eqref{eq:Lindblad}.
We first define two useful functions:
\begin{itemize}
	\item $\lv(\gamma):=\sum_{i=1}^d \sum_{j=1}^d \bra{i}\gamma\ket{j} \ket{i,j}$
	that rearranges entries of the linear operator $\gamma$
	on the Hilbert space $\h$ with dimension $d$
	as a column vector; and
	\item $\vl(\mathbf{v}):=\sum_{i=1}^d \sum_{j=1}^d \bra{i,j} \mathbf{v} \op{i}{j}$
	that rearranges entries of the column vector $\mathbf{v}$ as a linear operator.
\end{itemize}
Here, $\lv$ and $\vl$ are read as
``linear operator to vector'' and ``vector to linear operator'', respectively.
They are mutually inverse functions,
so that
if a linear operator (resp.~its vectorisation) is determined,
its vectorisation (resp.~the original linear operator) is determined.
Using the fact that $\mathbf{D}=\mathbf{A}\mathbf{B}\mathbf{C} \Longleftrightarrow
\lv(\mathbf{D})=(\mathbf{A} \otimes \mathbf{C}^\T)\lv(\mathbf{B})$
holds for any linear operators $\mathbf{A},\mathbf{B},\mathbf{C},\mathbf{D}$,
we can reformulate Eq.~\eqref{eq:Lindblad} as the linear ordinary differential equation
\begin{equation}\label{eq:ODE}
	\begin{aligned}
		\lv(\rho') &=\left[-\imath\HH\otimes\id+\imath\id\otimes\HH^\T
		+\sum_{j=1}^m \left(\LL_j\otimes\LL_j^{*}
		-\tfrac{1}{2}\LL_j^\dag\LL_j\otimes\id-\tfrac{1}{2}\id\otimes \LL_j^\T\LL_j^*\right)\right]
		\lv(\rho) \\
		&= \M \cdot\lv(\rho),
	\end{aligned}
\end{equation}
where $*$ denotes entry-wise complex conjugate.
We call $\M=-\imath\HH\otimes\id+\imath\id\otimes\HH^\T
+\sum_{j=1}^m \big( \LL_j\otimes\LL_j^{*}
-\tfrac{1}{2}\LL_j^\dag\LL_j\otimes\id \linebreak[0]
-\tfrac{1}{2}\id\otimes \LL_j^\T\LL_j^* \big)$
the \emph{governing matrix} of Eq.~\eqref{eq:ODE}.
As a result,
we get the desired solution $\lv(\rho(t))=\exp(\M\cdot t)\cdot\lv(\rho(0))$
or equivalently $\rho(t)=\vl(\exp(\M\cdot t)\cdot\lv(\rho(0)))$ in a closed form.
It can be obtained in polynomial time by the standard method~\cite{Kai80}.

\subsection{Number Theory}
\begin{definition}
	A number $\alpha$ is \emph{algebraic},
	denoted by $\alpha \in \mathbb{A}$,
	if there is a nonzero $\mathbb{Q}$-polynomial $f_\alpha(z)$ of least degree,
	satisfying $f_\alpha(\alpha)=0$;
	otherwise $\alpha$ is \emph{transcendental}.
\end{definition}
In the above definition, such a polynomial $f_\alpha(z)$ is called
the \emph{minimal polynomial} of $\alpha$.
The \emph{degree} $D$ of $\alpha$ is $\deg_z(f_\alpha)$.
The standard encoding of $\alpha$ is the minimal polynomial $f_\alpha$
plus an isolation disk in the complex plane
that distinguishes $\alpha$ from other roots of $f_\alpha$.

\begin{definition}
Let $\mu_1,\ldots,\mu_m$ be irrational complex numbers.
Then the \emph{field extension} $\mathbb{Q}(\mu_1,\ldots,\mu_m):\mathbb{Q}$
is the smallest set
that contains $\mu_1,\ldots,\mu_m$ and is closed under arithmetic operations,
i.e. addition, substraction, multiplication and division.
\end{definition}
Here those irrational complex numbers $\mu_1,\ldots,\mu_m$ are called
the generators of the field extension.
A field extension is \emph{simple} if it has only one generator.
For instance, the field extension $\mathbb{Q}(\sqrt{2}):\mathbb{Q}$
is exactly the set $\{a+b\sqrt{2} : a,b \in \mathbb{Q}\}$.

\begin{lemma}[{\cite[Algorithm~2]{Loo83}}]\label{lem:simple}
	Let $\alpha_1$ and $\alpha_2$ be two algebraic numbers of
	degrees $D_1$ and $D_2$, respectively.
	There is an algebraic number $\mu$ of degree at most $D_1 D_2$,
	such that the field extension $\mathbb{Q}(\mu):\mathbb{Q}$
	is exactly $\mathbb{Q}(\alpha_1,\alpha_2):\mathbb{Q}$.
\end{lemma}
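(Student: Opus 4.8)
The plan is to run the classical constructive argument for the primitive element theorem over the base field $\mathbb{Q}$. Write $f_1,f_2$ for the minimal polynomials of $\alpha_1,\alpha_2$, with complex roots $\alpha_1=\beta_1,\dots,\beta_{D_1}$ and $\alpha_2=\gamma_1,\dots,\gamma_{D_2}$; since $\mathbb{Q}$ has characteristic zero, each list consists of pairwise distinct numbers (the minimal polynomials are separable). I would search for a primitive element of the shape $\mu:=\alpha_1+c\,\alpha_2$ with a rational constant $c$ still to be fixed. The inclusion $\mathbb{Q}(\mu):\mathbb{Q}\subseteq\mathbb{Q}(\alpha_1,\alpha_2):\mathbb{Q}$ is automatic, so the work splits into (i) choosing $c$ so that the reverse inclusion holds and (ii) bounding the degree of $\mu$.

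For (i), consider $g(z):=f_1(\mu-c\,z)\in\mathbb{Q}(\mu)[z]$. By construction $g(\alpha_2)=f_1(\alpha_1)=0$, while $f_2(\alpha_2)=0$, so $\alpha_2$ is a common root of $g$ and $f_2$. I would compute $\gcd(g,f_2)$ inside $\mathbb{Q}(\mu)[z]$ by the Euclidean algorithm; since gcd is unchanged under passing to the splitting field, it is divisible by $z-\alpha_2$ and divides the separable polynomial $f_2$. If $\alpha_2$ is the \emph{only} common root, this gcd equals $z-\alpha_2$ up to a rational scalar, so $\alpha_2\in\mathbb{Q}(\mu)$, hence also $\alpha_1=\mu-c\alpha_2\in\mathbb{Q}(\mu)$, which yields $\mathbb{Q}(\alpha_1,\alpha_2)\subseteq\mathbb{Q}(\mu)$. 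The remaining roots of $f_2$ are $\gamma_2,\dots,\gamma_{D_2}$, and $\gamma_j$ is also a root of $g$ precisely when $\mu-c\gamma_j=\beta_i$ for some $i$, i.e. $c(\alpha_2-\gamma_j)=\beta_i-\alpha_1$; as $\alpha_2\neq\gamma_j$, this forces $c$ to lie in the finite set $\{(\beta_i-\alpha_1)/(\alpha_2-\gamma_j):1\le i\le D_1,\ 2\le j\le D_2\}$. Because $\mathbb{Q}$ is infinite, any $c\in\mathbb{Q}$ outside this excluded set does the job, and $\mu$ is then irrational, so $\mathbb{Q}(\mu):\mathbb{Q}$ is a field extension in the sense of the definition above.

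For (ii), once $\mathbb{Q}(\mu)=\mathbb{Q}(\alpha_1,\alpha_2)$ the tower law gives $[\mathbb{Q}(\mu):\mathbb{Q}]=[\mathbb{Q}(\alpha_1,\alpha_2):\mathbb{Q}(\alpha_2)]\cdot[\mathbb{Q}(\alpha_2):\mathbb{Q}]\le D_1D_2$, because $\alpha_1$ still annihilates the degree-$D_1$ polynomial $f_1$ over the larger field $\mathbb{Q}(\alpha_2)$. Closer to the cited algorithmic treatment, one can instead produce an explicit annihilator: $h(z):=\mathrm{Res}_y\big(f_2(y),\,f_1(z-c\,y)\big)\in\mathbb{Q}[z]$ has degree at most $D_1D_2$ in $z$ and satisfies $h(\mu)=0$ because $y=\alpha_2$ is a common root of $f_2(y)$ and $f_1(\mu-c\,y)$; the minimal polynomial of $\mu$ then divides $h$. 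Either way, $\deg_z(f_\mu)\le D_1D_2$.

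I expect the only delicate point to be the gcd step in (i): one must justify that the gcd computed over $\mathbb{Q}(\mu)$ agrees up to a scalar with the one computed over the splitting field (stability of gcd under field extension), and that excluding the $\BigO(D_1D_2)$ bad values of $c$ is exactly what pins this gcd down to degree one. The tower law, separability in characteristic zero, and the resultant estimate are all routine.
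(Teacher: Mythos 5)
Your argument is correct and is essentially the standard constructive primitive element theorem that underlies the cited Algorithm~2 of Loos (the paper itself offers no proof, only the citation): form $\mu=\alpha_1+c\,\alpha_2$ for a rational $c$ avoiding the finitely many bad ratios, recover $\alpha_2$ via $\gcd\bigl(f_1(\mu-cz),f_2(z)\bigr)$, and bound $\deg\mu$ by the tower law or by the resultant $\mathrm{Res}_y\bigl(f_2(y),f_1(z-cy)\bigr)$, which is exactly the annihilator Loos computes. The delicate points you flag (stability of the gcd under field extension, separability in characteristic zero) are indeed routine, so no gap remains.
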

For a collection of algebraic numbers $\alpha_1,\ldots,\alpha_m$
appearing in the input instance,
by repeatedly applying this lemma,
we can obtain a simple field extension $\mathbb{Q}(\mu):\mathbb{Q}$
that can span all $\alpha_1,\ldots,\alpha_m$.
\begin{lemma}[{\cite[Corollary~4.1.5]{Coh96}}]\label{lem:closed}
	Let $\alpha$ be an algebraic number of degree $D$,
	and $g(z)$ an $\mathbb{A}$-polynomial with degree $D_g$
	and coefficients taken from $\mathbb{Q}(\alpha):\mathbb{Q}$.
	There is a $\mathbb{Q}$-polynomial $f(z)$ of degree at most $DD_g$,
	such that roots of $g(z)$ are those of $f(z)$.
\end{lemma}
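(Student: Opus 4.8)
This statement is a standard fact from computational algebraic number theory, and the plan is to prove it by \emph{resultant} (norm) elimination of the parameter $\alpha$. Since $\alpha$ is algebraic of degree $D$, we have $\mathbb{Q}(\alpha)=\mathbb{Q}[\alpha]$, so each coefficient $c_i$ of $g(z)=\sum_{i=0}^{D_g} c_i z^i$ can be written uniquely as $c_i=p_i(\alpha)$ with $p_i$ a $\mathbb{Q}$-polynomial of degree $<D$; these representations are obtained by clearing denominators and inverting modulo the minimal polynomial $f_\alpha$ via the extended Euclidean algorithm. Replacing $\alpha$ by a fresh variable $y$ then yields $\tilde g(z,y):=\sum_{i=0}^{D_g} p_i(y)\,z^i\in\mathbb{Q}[z,y]$ with $\tilde g(z,\alpha)=g(z)$, of $z$-degree $D_g$ and $y$-degree at most $D-1$.

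I would then set
\[
  f(z)\;:=\;\mathrm{Res}_y\bigl(f_\alpha(y),\ \tilde g(z,y)\bigr),
\]
the resultant with respect to $y$, i.e.\ the determinant of the associated Sylvester matrix. Its entries are either coefficients of $f_\alpha$ (rational constants) or coefficients of $\tilde g$ regarded as a polynomial in $y$ (elements of $\mathbb{Q}[z]$ of $z$-degree at most $D_g$), so $f(z)\in\mathbb{Q}[z]$; and since every term of the determinant expansion uses exactly $D$ entries from the block coming from $\tilde g$, one per row, we get $\deg_z f\le D\cdot D_g$. Root containment follows from the defining property of resultants: if $g(z_0)=0$ for some $z_0\in\mathbb{C}$, then $f_\alpha$ and $\tilde g(z_0,\cdot)$ share the root $y=\alpha$, whence $f(z_0)=0$; thus every root of $g$ is a root of $f$. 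An equivalent, more conceptual route is to take $f(z)=\prod_{k=1}^{D} g^{(k)}(z)$, where $g^{(k)}$ substitutes the $k$-th Galois conjugate $\alpha^{(k)}$ for $\alpha$: the coefficients of this product are symmetric in the conjugates of $\alpha$, hence $\mathbb{Q}$-polynomials in the coefficients of $f_\alpha$, hence rational; its degree is exactly $D D_g$; and it vanishes at every root of $g=g^{(1)}$.

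One point to handle is that $f$ must not be the zero polynomial: this is automatic from the hypothesis that $g$ has degree $D_g$ (so $g\not\equiv 0$), since applying the field isomorphism $\mathbb{Q}(\alpha)\to\mathbb{Q}(\alpha^{(k)})$ shows each $g^{(k)}\not\equiv 0$, hence $f\not\equiv 0$; equivalently, $\mathrm{Res}_y(f_\alpha,\tilde g)\equiv 0$ would force the irreducible $f_\alpha(y)$ to divide $\tilde g(z,y)$ in $\mathbb{Q}[z,y]$, and hence $g\equiv 0$. Everything else is routine bookkeeping; this reproduces \cite[Corollary~4.1.5]{Coh96}. There is no real obstacle here — the only step that will need a little care is verifying the $D D_g$ degree bound.
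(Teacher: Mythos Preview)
The paper does not give its own proof of this lemma; it is stated with a citation to \cite[Corollary~4.1.5]{Coh96} and used as a black box (to conclude that roots of an $\mathbb{A}$-polynomial are algebraic). Your resultant/norm argument is exactly the standard proof behind that citation, and it is correct: writing $\tilde g(z,y)\in\mathbb{Q}[z,y]$ with $\tilde g(z,\alpha)=g(z)$ and taking $f(z)=\mathrm{Res}_y(f_\alpha(y),\tilde g(z,y))$ (equivalently $\prod_k g^{(k)}(z)$ over the Galois conjugates) yields a nonzero $\mathbb{Q}$-polynomial of degree at most $D D_g$ vanishing at every root of $g$. Your handling of the degree bound via the Sylvester matrix and of the nonvanishing of $f$ via irreducibility of $f_\alpha$ is fine.
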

The above lemma entails that
roots of any $\mathbb{A}$-polynomial are also algebraic.

\begin{theorem}[Lindemann (1882)~{\cite[Theorem~1.4]{Bak75}}]\label{Lindemann}
	For any nonzero algebraic numbers $\beta_1,\ldots,\beta_m$ and
	any distinct algebraic numbers $\lambda_1,\ldots,\lambda_m$,
	the sum $\sum_{i=1}^m \beta_i \mathrm{e}^{\lambda_i}$ with $m \ge 1$ is nonzero.
\end{theorem}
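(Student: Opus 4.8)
This is the Lindemann--Weierstrass theorem, and the plan is to derive it by Hermite's auxiliary-integral method after a Galois-theoretic symmetrisation; the case $m=1$ is immediate since $\mathrm{e}^{\lambda_1}\neq 0$, so assume $m\ge 2$. First I would argue by contradiction, supposing $\sum_{i=1}^m\beta_i\,\mathrm{e}^{\lambda_i}=0$ with all $\beta_i\in\mathbb{A}\setminus\{0\}$ and the $\lambda_i\in\mathbb{A}$ distinct, and, scaling by a positive integer, take the $\beta_i$ to be algebraic integers. The first goal is to replace this by a relation $\sum_{j}c_j\,\mathrm{e}^{\mu_j}=0$ whose coefficients $c_j$ are \emph{rational} integers, not all zero, and whose exponents $\mu_j$ have the property that their set of conjugates is \emph{Galois-stable}; the second is to contradict such a relation by a two-sided size estimate.

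For the reduction, fix a number field $K$, Galois over $\mathbb{Q}$, containing all the $\beta_i$ and $\lambda_i$ (it exists by the primitive-element construction of Lemma~\ref{lem:simple} followed by passing to the Galois closure), and set $G=\mathrm{Gal}(K/\mathbb{Q})$. I would symmetrise in two stages. First, $\prod_{\sigma\in G}\bigl(\sum_i\sigma(\beta_i)\,\mathrm{e}^{\lambda_i}\bigr)=0$, since the factor at $\sigma=\mathrm{id}$ is $\sum_i\beta_i\mathrm{e}^{\lambda_i}=0$; here only the coefficients are conjugated, so when the product is expanded and monomials with equal exponent are merged, every resulting coefficient is fixed by $G$ and is an algebraic integer, hence a rational integer. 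Second, applying $\prod_{\sigma\in G}$ once more, now conjugating the exponents, leaves the coefficients rational (and constant on conjugacy classes) while making the exponent set that appears Galois-stable. At each stage non-triviality is preserved by an extremal argument: fixing a $\mathbb{Q}$-linear total order on $\mathbb{C}\cong\mathbb{R}^2$ (say lexicographic on real and imaginary parts), the monomial of largest exponent in the expansion is unique — because within each factor the exponents are pairwise distinct — and its coefficient is a single nonzero product of conjugates, so no cancellation can annihilate it. After discarding vanishing terms and translating all exponents by $-\mu_l$ for some $l$ with $c_l\neq 0$, we obtain $\sum_{j=1}^n c_j\,\mathrm{e}^{\mu_j}=0$ with $c_j\in\mathbb{Z}$ constant on conjugacy classes, $\mu_l=0$, $c_l\neq 0$; moreover, since $(c_j)$ is class-constant and the exponents came from a Galois-stable set, the conjugate relations $\sum_j c_j\,\mathrm{e}^{\sigma(\mu_j)}=0$ hold for all $\sigma\in G$ (translation respects this), and the conjugates of the nonzero $\mu_j$ still form a Galois-stable set.

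Now fix a large prime $p$ and a positive integer $\ell$ with all $\ell\mu_j$ algebraic integers, set
\[
  f(x)=\frac{\ell^{N}\,x^{p-1}\prod_\theta(x-\theta)^{p}}{(p-1)!},\qquad F(x)=\sum_{k\ge 0}f^{(k)}(x),
\]
where $\theta$ runs over all conjugates of the nonzero $\mu_j$ and $N$ is chosen so that $(p-1)!\,f\in\mathbb{Z}[x]$; then $\tfrac{\mathrm{d}}{\mathrm{d}x}\!\left(\mathrm{e}^{-x}F(x)\right)=-\mathrm{e}^{-x}f(x)$ gives $\mathrm{e}^{a}F(0)-F(a)=\mathrm{e}^{a}\int_0^a\mathrm{e}^{-x}f(x)\,\mathrm{d}x$ for every $a$. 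For each $\sigma\in G$, weighting this identity at $a=\sigma(\mu_j)$ by $c_j$, summing over $j$, and using $\sum_j c_j\mathrm{e}^{\sigma(\mu_j)}=0$ to cancel the $F(0)$ terms yields
\[
  \mathcal{E}_\sigma:=\sum_{j=1}^n c_j\,\mathrm{e}^{\sigma(\mu_j)}\int_0^{\sigma(\mu_j)}\mathrm{e}^{-x}f(x)\,\mathrm{d}x=-\sum_{j=1}^n c_j\,F(\sigma(\mu_j))=\sigma(\mathcal{E}_{\mathrm{id}}),
\]
so that $\prod_{\sigma\in G}\mathcal{E}_\sigma=\mathrm{N}_{K/\mathbb{Q}}(\mathcal{E}_{\mathrm{id}})\in\mathbb{Z}$. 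On the arithmetic side, $0$ is a $(p-1)$-fold root of $f$, so $F(0)\equiv f^{(p-1)}(0)=\pm M^{p}\pmod p$ with $M\in\mathbb{Z}\setminus\{0\}$ coprime to $p$ for $p$ large, while every other $\mu_j$ is a $p$-fold root of $f$, making $F(\mu_j)$ an algebraic integer divisible by $p$; hence $\mathcal{E}_{\mathrm{id}}\equiv\mp c_l M^{p}\pmod p$ in $\mathcal{O}_K$ is not divisible by $p$ once also $p\nmid c_l$, so $\mathrm{N}_{K/\mathbb{Q}}(\mathcal{E}_{\mathrm{id}})$ is a nonzero rational integer and $\bigl|\prod_{\sigma}\mathcal{E}_\sigma\bigr|\ge 1$. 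On the analytic side, $|f(x)|\le C^{p}/(p-1)!$ on each segment $[0,\sigma(\mu_j)]$ with $C$ independent of $p$, so $|\mathcal{E}_\sigma|\le C_1^{p}/(p-1)!$ and $\bigl|\prod_{\sigma}\mathcal{E}_\sigma\bigr|\le\bigl(C_1^{p}/(p-1)!\bigr)^{|G|}\to 0$ as $p\to\infty$; taking $p$ large enough contradicts the lower bound, and the theorem follows.

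The step I expect to be the real obstacle is the reduction: carrying out the two symmetrisations and verifying — via the lexicographic extremal argument — that neither collapses the relation to the trivial identity $0=0$, together with the choice of the power $N$ of the common denominator $\ell$ that makes all the algebraic quantities above genuine algebraic integers and validates the divisibility claims for the derivatives $f^{(k)}$. The Hermite identity, the norm manipulation, the $p$-adic bookkeeping, and the crude estimate $|f(x)|\le C^{p}/(p-1)!$ are then routine.
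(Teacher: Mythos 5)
The paper gives no proof of this statement---it is the classical Lindemann--Weierstrass theorem, quoted verbatim from Baker's monograph---so the only benchmark is the standard proof in that reference, and your sketch reproduces essentially that argument: Galois symmetrisation to a nontrivial integer relation with conjugation-stable exponents, the Hermite integral identity, and the two-sided estimate on $\prod_{\sigma}\mathcal{E}_\sigma$ (a nonzero rational integer mod $p$ versus an upper bound of order $C^{p}/(p-1)!$). The outline is correct, including the two points most often fumbled (non-triviality of both symmetrisations via the leading term in a $\mathbb{Q}$-linear order, and the survival of the conjugate relations under the translation by $-\mu_l$); the details you defer---the $\ell^{N}$ bookkeeping that makes the values $f^{(k)}$ at the exponents genuine algebraic integers---are indeed the routine ones, handled in Baker by evaluating polynomials in $\ell x$ with integer coefficients.
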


\begin{conjecture}[Schanuel (1960s)~\cite{Ax71}]\label{Schanuel}
	Let $\lambda_1,\ldots,\lambda_m$ be
	$\mathbb{Q}$-linearly independent complex numbers.
	Then the field extension
	$\mathbb{Q}(\lambda_1,\mathrm{e}^{\lambda_1},\ldots,\lambda_m,\mathrm{e}^{\lambda_m}):\mathbb{Q}$
	has transcendence degree at least $m$.
\end{conjecture}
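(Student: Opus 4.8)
Here I have to be candid: the final statement is Schanuel's conjecture, one of the central open problems of transcendental number theory, and the paper (consistent with its being stated as a \emph{Conjecture}) only uses it as a hypothesis. So I cannot offer a plan that I expect to reach the full statement; what I can describe is the fragment that is genuinely provable and the route a real proof would presumably take, together with the wall it runs into.

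The provable fragment is the case in which $\lambda_1,\dots,\lambda_m$ are all algebraic, i.e.\ the Lindemann--Weierstrass theorem, and it follows from Theorem~\ref{Lindemann} by a Vandermonde-style reduction. Suppose, for contradiction, that $\mathrm{e}^{\lambda_1},\dots,\mathrm{e}^{\lambda_m}$ were algebraically dependent over $\mathbb{Q}$; grouping monomials in a witnessing relation yields a vanishing sum $\sum_i \beta_i\,\mathrm{e}^{\mu_i}$ with the $\beta_i$ nonzero algebraic and the exponents $\mu_i=\sum_j c_{ij}\lambda_j$ (for integers $c_{ij}$) pairwise distinct --- distinctness being exactly the hypothesis that $\lambda_1,\dots,\lambda_m$ are $\mathbb{Q}$-linearly independent. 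This contradicts Theorem~\ref{Lindemann}. Hence $\mathrm{e}^{\lambda_1},\dots,\mathrm{e}^{\lambda_m}$ are algebraically independent over $\mathbb{Q}$, so the field extension in the statement has transcendence degree at least $m$ (in fact exactly $m$, since the $\lambda_j$ are algebraic). Together with Lemmas~\ref{lem:simple} and~\ref{lem:closed} this is the only part of the assertion I can actually certify; the ``linear shadow'' of the general case --- linear independence over the algebraic numbers of $\mathbb{Q}$-linearly independent logarithms of algebraic numbers --- is reachable by Baker's method and W\"ustholz's analytic subgroup theorem, but that is far weaker than the algebraic-independence claim.

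The one setting where the exact analogue of the statement is a theorem is the differential-algebraic one: Ax~\cite{Ax71} proved that over a differential field with constant field $C$, if $\delta y_i = y_i\,\delta x_i$ for every derivation $\delta$ and $x_1,\dots,x_m$ are $\mathbb{Q}$-linearly independent modulo $C$, then $\mathrm{trdeg}_C\, C(x_1,y_1,\dots,x_m,y_m) \ge m+1$; specialising to $x_i \in t\,\mathbb{C}[[t]]$ with $y_i=\exp(x_i)$ gives a formal-power-series version of Schanuel's statement. The natural plan for the genuine conjecture is to transplant Ax's argument --- in essence a logarithmic-derivative computation fed into a zero estimate, run against the derivation of the power-series ring --- to the archimedean analytic setting, replacing that formal derivation by an estimate at a complex (or real) place. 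This transfer is exactly where every known attempt stalls, and it is the main obstacle: there is no arithmetic counterpart of the derivation Ax exploits, equivalently no zero estimate strong enough to exclude simultaneously all algebraic relations among the $\lambda_i$ and the $\mathrm{e}^{\lambda_i}$ once the $\lambda_i$ are allowed to be transcendental. Accordingly I would not attempt a proof in the paper: I would cite Ax's formulation and the Lindemann--Weierstrass special case, retain the label ``Conjecture'', and state explicitly that the real-root-isolation algorithm of Section~\ref{S5} --- and every result depending on it --- is conditional on Conjecture~\ref{Schanuel}.
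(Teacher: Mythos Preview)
Your assessment is correct and matches the paper's treatment: Conjecture~\ref{Schanuel} is stated without proof precisely because it is Schanuel's conjecture, an open problem, and the paper uses it only as a hypothesis underpinning Corollary~\ref{cor:common} and the completeness of Algorithm~\ref{isol}. Your proposed handling---cite it as a conjecture, note the Lindemann--Weierstrass special case (already present as Theorem~\ref{Lindemann}), and flag the downstream results as conditional---is exactly what the paper does.
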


Let $\mathbb{A}[z_1,\ldots,z_m]$ denote the ring
that contains all $\mathbb{A}$-polynomials in variables $z_1,\ldots,z_m$.
Assuming Schanuel's conjecture, we could get:
\begin{corollary}[{\cite[Proposition~5]{COW16}}]\label{cor:common}
	Let $\lambda_1,\ldots,\lambda_m$ be
	$\mathbb{Q}$-linearly independent algebraic numbers.
	Then two co-prime elements $\varphi_1$ and $\varphi_2$
	in the ring $\mathbb{A}[t,\exp(\lambda_1 t),\ldots,\exp(\lambda_m t)]$
	have no common root except for $0$.
\end{corollary}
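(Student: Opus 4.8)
The plan is to present $R:=\mathbb{A}[t,\exp(\lambda_1 t),\ldots,\exp(\lambda_m t)]$ as a genuine polynomial ring and then turn ``$\varphi_1,\varphi_2$ have a common root $t_0\neq 0$'' into ``$\varphi_1,\varphi_2$ have a common non-unit factor'', contradicting co-primality. The first step is to check that the substitution homomorphism $\mathbb{A}[t,y_1,\ldots,y_m]\to R$, $y_i\mapsto\exp(\lambda_i t)$, is an isomorphism: it is surjective by definition of $R$, and injective because $t,\exp(\lambda_1 t),\ldots,\exp(\lambda_m t)$ are algebraically independent \emph{as functions}. Indeed, expanding a hypothetical polynomial relation yields an identity $\sum_{\bm a\in\mathbb{Z}_{\ge 0}^m} c_{\bm a}(t)\,\exp\bigl((\bm a\cdot\bm\lambda)\,t\bigr)\equiv 0$ whose exponents $\bm a\cdot\bm\lambda$ are pairwise distinct (the $\lambda_i$ being $\mathbb{Q}$-linearly, hence $\mathbb{Z}$-linearly, independent), and the classical linear independence of the functions $t^{k}\exp(\mu t)$ then forces every $c_{\bm a}\equiv 0$. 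So $R\cong\mathbb{A}[t,y_1,\ldots,y_m]$ is a unique factorisation domain, ``co-prime'' is meaningful, and we write $F_1,F_2$ for the polynomials representing $\varphi_1,\varphi_2$; we may assume both are nonzero, since otherwise co-primality forces the other to be a unit of $R$, i.e.\ a nonzero constant, which has no root.

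Now suppose for contradiction that $t_0\neq 0$ is a common root, and let $\mathfrak p\subseteq\mathbb{A}[t,y_1,\ldots,y_m]$ be the kernel of evaluation at the point $\bigl(t_0,\exp(\lambda_1 t_0),\ldots,\exp(\lambda_m t_0)\bigr)$, a prime ideal containing both $F_1$ and $F_2$. The crux is the claim that
\[
\operatorname{trdeg}_{\mathbb{Q}}\mathbb{Q}\bigl(t_0,\exp(\lambda_1 t_0),\ldots,\exp(\lambda_m t_0)\bigr)\ \ge\ m ,
\]
which, because $\mathbb{A}[t,y_1,\ldots,y_m]$ is a polynomial ring over a field (so $\operatorname{ht}\mathfrak p+\dim(\mathbb{A}[t,y_1,\ldots,y_m]/\mathfrak p)=m+1$ and the dimension of that quotient equals the transcendence degree of $\mathbb{A}(t_0,\exp(\lambda_1 t_0),\ldots)$ over $\mathbb{A}$, itself equal to the transcendence degree over $\mathbb{Q}$), is equivalent to $\operatorname{ht}\mathfrak p\le 1$. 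Granting the claim, $\mathfrak p$ is either $(0)$ --- whence $F_1=0$, contradicting $\varphi_1\neq 0$ --- or a height-one prime in a UFD, hence principal, say $\mathfrak p=(h)$ with $h$ irreducible; then $h$ is a non-unit dividing both $F_1$ and $F_2$, contradicting co-primality. So the corollary reduces to the displayed bound.

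To prove the bound I would split on whether $t_0$ is algebraic. If $t_0\in\mathbb{A}$, then $\lambda_1 t_0,\ldots,\lambda_m t_0$ are $\mathbb{Q}$-linearly independent algebraic numbers, so $\exp(\lambda_1 t_0),\ldots,\exp(\lambda_m t_0)$ are algebraically independent over $\mathbb{Q}$ by the Lindemann--Weierstrass theorem (the algebraic-independence form of Theorem~\ref{Lindemann}), giving the bound immediately; in fact one can finish directly here, since $\sum_{\bm a} c_{\bm a}(t_0)\exp\bigl((\bm a\cdot\bm\lambda)t_0\bigr)=\varphi_1(t_0)=0$ has pairwise distinct algebraic exponents, so Theorem~\ref{Lindemann} forces every $c_{\bm a}(t_0)=0$, whence $(t-t_0)$ divides both $F_1$ and $F_2$ --- already a non-unit common factor. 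If $t_0$ is transcendental, I would invoke Conjecture~\ref{Schanuel}: choosing a subset $B\subseteq\{1,\lambda_1,\ldots,\lambda_m\}$ that is a $\mathbb{Q}$-basis of $\operatorname{span}_{\mathbb{Q}}\{1,\lambda_1,\ldots,\lambda_m\}$, the set $\{t_0 b:b\in B\}$ is $\mathbb{Q}$-linearly independent (as $t_0\neq 0$), so Schanuel's conjecture gives $\operatorname{trdeg}_{\mathbb{Q}}\mathbb{Q}\bigl(\{t_0 b,\exp(t_0 b):b\in B\}\bigr)\ge|B|$. When $1\notin\operatorname{span}_{\mathbb{Q}}\{\lambda_1,\ldots,\lambda_m\}$ we have $|B|=m+1$ and this field is algebraic over $\mathbb{Q}\bigl(t_0,\exp(t_0),\exp(\lambda_1 t_0),\ldots,\exp(\lambda_m t_0)\bigr)$ (each $t_0\lambda_i$ being algebraic over $\mathbb{Q}(t_0)$), so deleting the single generator $\exp(t_0)$ leaves transcendence degree $\ge m$; when $1\in\operatorname{span}_{\mathbb{Q}}\{\lambda_1,\ldots,\lambda_m\}$ we may take $B=\{\lambda_1,\ldots,\lambda_m\}$, and then $t_0\in\mathbb{Q}(\{\lambda_i t_0\})$ makes $\mathbb{Q}\bigl(\{\lambda_i t_0,\exp(\lambda_i t_0)\}\bigr)$ algebraic over $\mathbb{Q}\bigl(t_0,\exp(\lambda_1 t_0),\ldots,\exp(\lambda_m t_0)\bigr)$, giving the bound directly. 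I expect this last case analysis --- choosing the right $\mathbb{Q}$-linearly independent tuple to feed into Schanuel's conjecture and keeping track of the at most one superfluous generator $\exp(t_0)$ --- to be the main technical point; everything else is the standard dictionary between transcendence degree, Krull dimension, and principality of height-one primes in a UFD.
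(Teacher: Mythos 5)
The paper does not actually prove this statement---it is imported as \cite[Proposition~5]{COW16}---and your argument is a correct reconstruction of the proof given in that reference: identify $\mathbb{A}[t,\exp(\lambda_1 t),\ldots,\exp(\lambda_m t)]$ with a polynomial ring in $m+1$ variables, observe that co-primality forces any common vanishing point to lie in a prime of height at least $2$, and use Lindemann's theorem (for $t_0$ algebraic) resp.\ Schanuel's conjecture (for $t_0$ transcendental) to show that $(t_0,\exp(\lambda_1 t_0),\ldots,\exp(\lambda_m t_0))$ has transcendence degree at least $m$ whenever $t_0\neq 0$. One simplification: the case analysis in your final paragraph is unnecessary, since applying Conjecture~\ref{Schanuel} directly to the $\mathbb{Q}$-linearly independent numbers $\lambda_1 t_0,\ldots,\lambda_m t_0$ gives $\operatorname{trdeg}_{\mathbb{Q}}\mathbb{Q}\bigl(\lambda_1 t_0,\ldots,\lambda_m t_0,\exp(\lambda_1 t_0),\ldots,\exp(\lambda_m t_0)\bigr)\ge m$, and each $\lambda_i t_0$ is algebraic over $\mathbb{Q}(t_0)$, so the desired bound for $\mathbb{Q}\bigl(t_0,\exp(\lambda_1 t_0),\ldots,\exp(\lambda_m t_0)\bigr)$ follows at once, with no need to distinguish $t_0$ algebraic from transcendental or to track the superfluous generator $\exp(t_0)$.
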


\section{Quantum Continuous-Time Markov Chain}\label{S3}
In this section, we propose the model of quantum continuous-time Markov chain (QCTMC).
We will reveal that it extends the classical continuous-time Markov chain (CTMC).
To show the practical usefulness,
an example is further provided for modelling open quantum walk.

For the sake of clarity,
we start with the QCTMC without classical states:

\begin{definition}\label{def:QCTMC1}
	A \emph{quantum continuous-time Markov chain} $\QC$ is a pair $(\h,\L)$,
	in which
	\begin{itemize}
		\item $\h$ is the Hilbert space,
		\item $\L$ is the transition generator function given by
		a Hermitian operator $\HH$
		and a finite set of linear operators $\LL_j$ on $\h$.
	\end{itemize}
	Usually, a density operator $\rho(0) \in \D$
	is appointed as the initial state of $\QC$.
\end{definition}

In the model,
the transition generator function $\L$ gives rise to a \emph{universal} way
to describe the bahavior of the QCTMC,
following the generality of the Lindblad's master equation.
Thus the state $\rho(t)$ is given by the closed-form solution to Eq.~\eqref{eq:ODE},
i.e., $\vl(\exp(\M\cdot t)\cdot\lv(\rho(0)))$, 
where $\M$ is the governing matrix for $\L$,
is a computable function from $\mathbb{R}_{\ge 0}$ to $\mathbb{C}^{d \times d}$.
We notice that $0 \le \tr(\PP \rho(t)) \le 1$ holds for any projector $\PP$ on $\h$,
as $\rho(t)$ is a density operator on $\D$.
Considering computability, the entries of $\HH$, $\LL_j$ and $\rho(0)$ are supposed to be algebraic.

Next, we equip the QCTMC in Definition~\ref{def:QCTMC1}
with finitely many classical states.
\begin{definition}\label{def:QCTMC2}
	A \emph{quantum continuous-time Markov chain} $\QC$	with a finite set $S$ of classical states
	is a pair $(\h_\cq,\L)$, in which
	\begin{itemize}
		\item $\h_\cq:=\mathcal{C} \otimes \h$ is the classical--quantum system
		with $\mathcal{C}=\spn(\{\ket{s}: s\in S\})$, and
		\item $\L$ is the transition generator function given by
		a Hermitian operator $\HH$
		and a finite set of linear operators $\LL_j$ on the enlarged $\h_\cq$.
	\end{itemize}
	Usually, a density operator $\rho(0) \in \D_{\h_\cq}$
	is appointed as the initial state of $\QC$.
\end{definition}

In fact, the models in Definitions~\ref{def:QCTMC1} and~\ref{def:QCTMC2}
have the same expressibility:
The QCTMC in Definition~\ref{def:QCTMC1} can be obtained
by setting the singleton state set $S=\{s\}$ of the QCTMC in Definition~\ref{def:QCTMC2};
conversely, the QCTMC in Definition~\ref{def:QCTMC2} can be obtained
by setting the Hilbert space as $\h_\cq$ of the QCTMC in Definition~\ref{def:QCTMC1}.
Hence, we can freely choose one of the two definitions for convenience.
As an immediate result, using Definition~\ref{def:QCTMC2},
we can easily see that the QCTMC extends the CTMC by the following lemma:
\begin{lemma}
	Given a CTMC $\mathfrak{C}=(S,\mathbf{Q})$,
	it can be modelled by a QCTMC $\QC=(\mathcal{C} \otimes \h,\L)$
	with $\mathcal{C}=\spn(\{\ket{s}:s\in S\})$ and $\dim(\h)=1$.
\end{lemma}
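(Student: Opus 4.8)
The plan is to reproduce the classical dynamics of $\mathfrak{C}$ by a purely dissipative Lindblad generator acting on diagonal density operators. Write $S=\{s_1,\dots,s_n\}$ and let $q_{kl}$ denote the $(k,l)$-entry of $\mathbf{Q}$, so that $q_{kl}\ge 0$ for $k\ne l$ and $q_{kk}=-\sum_{l\ne k}q_{kl}$; the transient distribution of $\mathfrak{C}$ started from $\pi(0)$ is $\pi(t)=\exp(\mathbf{Q}^\T t)\pi(0)$, i.e.\ the unique solution of $\pi'(t)=\mathbf{Q}^\T\pi(t)$. Since $\dim(\h)=1$, we identify $\h_\cq=\mathcal{C}\otimes\h$ with $\mathcal{C}$, an $n$-dimensional space with orthonormal basis $\{\ket{s}:s\in S\}$, and we embed a distribution $\pi$ on $S$ into $\D_{\h_\cq}$ as the diagonal state $\rho_\pi:=\sum_k\pi_k\op{s_k}{s_k}=\diag(\pi)$, which is a genuine density operator.

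First I would specify $\L$: take $\HH=\mathbf{0}$ for the Hamiltonian part, and one jump operator $\LL_{kl}:=\sqrt{q_{kl}}\,\op{s_l}{s_k}$ for each ordered pair $k\ne l$ (well defined since $q_{kl}\ge 0$, and algebraic whenever the rates are, as assumed for QCTMCs). The core computation is then to evaluate $\L(\rho_\pi)$ from Eq.~\eqref{eq:Lindblad}. With $\HH=\mathbf{0}$ only the dissipator remains; using $\op{s_l}{s_k}\rho_\pi\op{s_k}{s_l}=\pi_k\op{s_l}{s_l}$ and $\LL_{kl}^\dag\LL_{kl}=q_{kl}\op{s_k}{s_k}$, hence $\LL_{kl}^\dag\LL_{kl}\rho_\pi=\rho_\pi\LL_{kl}^\dag\LL_{kl}=q_{kl}\pi_k\op{s_k}{s_k}$, one obtains
\[
  \L(\rho_\pi)=\sum_{k\ne l}q_{kl}\pi_k\bigl(\op{s_l}{s_l}-\op{s_k}{s_k}\bigr)
  =\sum_{l}\Bigl(\textstyle\sum_{k\ne l}q_{kl}\pi_k+q_{ll}\pi_l\Bigr)\op{s_l}{s_l}
  =\diag(\mathbf{Q}^\T\pi),
\]
where the last equality uses $q_{ll}=-\sum_{k\ne l}q_{kl}$ to restore the full column sum $\sum_k q_{kl}\pi_k=(\mathbf{Q}^\T\pi)_l$. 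In particular $\L$ leaves the set of diagonal states invariant and there acts exactly as $\mathbf{Q}^\T$ acts on distributions.

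Finally I would close the argument: starting the QCTMC from $\rho(0)=\diag(\pi(0))$, the closed-form solution $\rho(t)=\vl(\exp(\M t)\lv(\rho(0)))$ stays diagonal and equals $\diag(\pi(t))$ for all $t\ge 0$, where $\pi(t)$ is the transient distribution of $\mathfrak{C}$. Since $\pi\mapsto\diag(\pi)$ is a bijection between distributions on $S$ and diagonal density operators on $\h_\cq$ that preserves $\tr(\op{s}{s}\rho_\pi)=\pi_s$ for every $s\in S$, the QCTMC $\QC$ faithfully models $\mathfrak{C}$, with the classical state predicate ``being in $s$'' corresponding to the projector $\op{s}{s}$.

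As for difficulty, there is no deep obstacle here; the only things that require care are the bookkeeping in the dissipator --- verifying that diagonal states remain diagonal, i.e.\ that no off-diagonal coherences are created --- and matching the negative diagonal of $\mathbf{Q}$ with the decay terms $-\tfrac12(\LL_j^\dag\LL_j\rho+\rho\LL_j^\dag\LL_j)$, where the row-sum-zero property of $\mathbf{Q}$ is precisely what closes the gap; one should also fix the row/column convention once and for all so that it is $\mathbf{Q}^\T$ rather than $\mathbf{Q}$ that appears.
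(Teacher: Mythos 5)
Your proof is correct and takes essentially the same route as the paper: zero Hamiltonian, one jump operator per transition proportional to $\op{t}{s}$, and the observation that diagonal states stay diagonal and evolve as the classical distribution. The only difference is that you correctly scale the jump operators by $\sqrt{q_{kl}}$ and restrict to $k\ne l$, whereas the paper writes $\LL_{s,t}=\op{t}{s}\otimes\mathbf{Q}[s,t]$ for all pairs and merely asserts the validation; since the dissipator is quadratic in the $\LL_j$, your square-root scaling is the one that actually reproduces the rates of $\mathbf{Q}$, and your explicit computation showing the row-sum-zero property matching the $-\tfrac12\{\LL^\dag\LL,\rho\}$ terms supplies the verification the paper omits.
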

\begin{proof}
    It suffices to show that
    the states of a CTMC $\mathfrak{C}$ can be obtained by those of some QCTMC $\QC$.
	The state $\x=(x_s)_{s \in S}$ of $\mathfrak{C}$ is given by
	the dynamical system $\x'(t)=\x(t) \cdot \mathbf{Q}$
	or equivalently its closed-form solution
	$\x(t)=\x(0) \cdot \exp(\mathbf{Q}\cdot t)$,
	where $\x(0)$ is a row vector interpreted as the initial state.
	We construct the QCTMC by setting Hermitian operator $\HH=0$
	and linear operators $\LL_{s,t}=\op{t}{s}\otimes \mathbf{Q}[s,t]$ in $Q$
	for each pair $s,t \in S$.
	It is not hard to validate that the state $\rho(t)$ of $\QC$
	is $\diag(\x(t))$ of $\mathfrak{C}$,
	thus the lemma follows.
\end{proof}
	
\begin{example}[Open Quantum Walk~\cite{Pel14,SiP15}]\label{ex1}
	Open quantum walk (OQW) is a quantum analogy of random walk,
	whose system evolution interacts with environment.
	For the sake of clarity, we suppose that
	a particle walking along the $2$-dimensional hypercubic shown in Figure~\ref{fig:QW}.
	The position set is $S = \{s_{00}, s_{01}, s_{10}, s_{11}\}$,
	where $s_{00}$ denotes the starting position, a.k.a. the entrance,
	$s_{01}$ and $s_{10}$ denote transient positions,
	and $s_{11}$ denotes the exit, an absorbing boundary.
	The direction set is $\{\mathrm{F},\mathrm{S}\}$,
	where $\mathrm{F}$ means the particle takes the external transition along the first coordinate,
	while $\mathrm{S}$ means the particle takes the external transition along the second coordinate.
	The particle will choose a direction at every moment by the inner quantum ``coin-tossing''
	before being absorbed.
	This action is implemented by the Hadamard operator
	$H = \op{+}{\mathrm{F}} + \op{-}{\mathrm{S}}$
	with $\ket{\pm} = (\ket{\mathrm{F}}\pm\ket{\mathrm{S}})/\sqrt{2}$,
	which denotes a fair selection between $\mathrm{F}$ and $\mathrm{S}$
	as the probability amplitudes of both directions are $\tfrac{1}{2}=(\pm1/\sqrt{2})^2$.
	
	The OQW is modelled by a QCTMC $\QC_1=(\mathcal{C}\otimes\h,\L)$
	with $\mathcal{C}=\spn(\{\ket{s}:s\in S\})$,
	where each position in $S$ represents a classical state,
	and the transition function $\L$ is given by the Hermitian operator $\HH=0$
	and the unique linear operator
	\[
	\begin{aligned}
		\LL = & \op{s_{01}}{s_{00}}\otimes\op{\mathrm{S}}{-}
			+ \op{s_{10}}{s_{00}}\otimes\op{\mathrm{F}}{+}
			+ \op{s_{00}}{s_{01}}\otimes\op{\mathrm{S}}{-} \ + \\
			& \op{s_{11}}{s_{01}}\otimes\op{\mathrm{F}}{+}
			+ \op{s_{00}}{s_{10}}\otimes\op{\mathrm{F}}{+} 
			+ \op{s_{11}}{s_{10}}\otimes\op{\mathrm{S}}{-}.
	\end{aligned}
	\]
	For instance, when the particle is in the position $\ket{s_{00}}$,
	we first apply the quantum coin-tossing $H$ to the state in $\h$,
	then we get the result $\mathrm{F}$ or $\mathrm{S}$
	leading to the position $\ket{s_{10}}$ or $\ket{s_{01}}$.
	The composite operations are $(\op{\mathrm{F}}{\mathrm{F}}) H = \op{\mathrm{F}}{+}$
	and $(\op{\mathrm{S}}{\mathrm{S}}) H = \op{\mathrm{S}}{-}$,
	which makes up the first two terms in the above $\LL$.
	
	From $\LL$,
	we could get the governing matrix
	$\M = \LL\otimes\LL^* - \tfrac{1}{2}\LL^\dag\LL\otimes\id - \tfrac{1}{2}\id\otimes\LL^\T\LL^*$.
	Let $\rho(0) = \op{s_{00}}{s_{00}}\otimes\op{\mathrm{F}}{\mathrm{F}}$ be an initial state.
	Then the states $\rho(t)$ of the OQW could be computed
	as $\vl(\exp(\M\cdot t)\cdot\lv(\rho(0)))$ (we omit the detailed value due to space limit). \qed
		
		\begin{multicols}{2}
		\begin{itemize}
		\item[] 
		\begin{minipage}{\linewidth}
		\captionsetup{type=figure}
		\begin{tikzpicture}[->,>=stealth',auto,node distance=2.5cm,semithick,inner sep=2pt]
			\node[state](s0){$s_{01}$};
			\node[state,initial,initial text={entrance}](s1)[below of=s0]{$s_{00}$};
			\node[state](s2)[right of=s1]{$s_{10}$};
		\node[state,accepting,label={right:exit}](s3)[right of=s0]{$s_{11}$};
			\draw(s1)edge[bend left]node[]{S}(s0);
			\draw(s0)edge[bend left]node[]{S}(s1);
			\draw(s2)edge[bend left]node[]{F}(s1);
			\draw(s1)edge[bend left]node[]{F}(s2);
			\draw(s2)edge[bend left]node[]{S}(s3);
			\draw[dashed](s3)edge[bend left]node[]{S}(s2);
			\draw(s0)edge[bend left]node[]{F}(s3);
			\draw[dashed](s3)edge[bend left]node[]{F}(s0);
		\end{tikzpicture}
    	
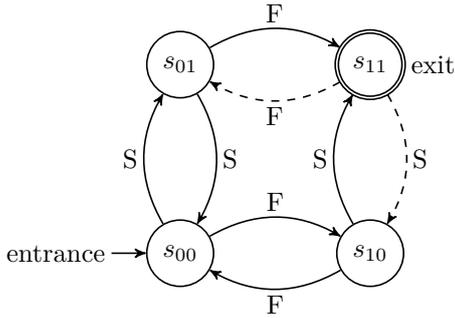
\captionof{figure}{A sample OQW with an absorbing boundary}
    	\label{fig:QW}
		\end{minipage}
		\begin{minipage}{\linewidth}
		\item[]
		\null\vfill\null\vfill
		As an absorbing boundary $s_{11}$ is introduced here,
		two transitions from it (in dashed line) do not exist anymore.
		Absorbing boundaries makes it impossible
		to characterise the system evolution by a closed system,
		i.e., using some Hermitian operator $\HH$ only,
		as usual in~\cite{SiP15}. 
		Fortunately, we could characterise it by an open system,
		i.e., using the linear operators $\LL_j$.
		\vfill\null\vfill \null
		\end{minipage}
		\end{itemize}
		\end{multicols}
\end{example}

\section{Signal Temporal Logic}\label{S4}
Here we recall the signal temporal logic (STL)~\cite{MaN04},
which is widely used to express real-time properties.
Using it we could specify richer properties of QCTMC
than linear temporal logic (LTL) and computation tree logic (CTL) in the time-bounded fragment.

\begin{definition}\label{def:syntax}
	The syntax of the STL formulas are defined as follows:
	\[
	\phi := \Phi \mid \neg\phi \mid \phi_1 \wedge \phi_2 \mid \phi_1 \ntl^\I \phi_2
	\]
	in which the atomic propositions $\Phi$,
	interpreted as \emph{signals},
	are of the form $p(\x) \in \II$
	where $p$ is a $\mathbb{Q}$-polynomial in $\x=(x_s)_{s\in S}$
	and $\II$ is a rational interval,
	and $\I$ is a finite time interval.
	Here $\ntl$ is called the until operator,
	and $\phi_1 \ntl^\I \phi_2$ is the until formula.
\end{definition}
	
\begin{definition}\label{def:semantics}
	The semantics of the STL formulas interpreted on a QCTMC $\QC$ in Definition~\ref{def:QCTMC2}
	are given by the satisfaction relation $\models$:
	\begin{align*}
		\rho(t) & \models \Phi
		&& \textup{if } p(\x) \in \II
		\textup{ holds with }x_s = \tr(\PP_s(\rho(t))), \\
		\rho(t) & \models \neg\phi
		&& \textup{if } \rho(t) \not\models \phi, \\
		\rho(t) & \models \phi_1 \wedge \phi_2
		&& \textup{if } \rho(t) \models \phi_1 \wedge \rho(t) \models \phi_2, \\
		\rho(t) & \models \phi_1 \ntl^\I \phi_2
		&& \textup{if there exists a real number }t' \in \I, \\
		&&& \textup{such that }\forall\,t_1 \in [t,t+t') \,:\, \rho(t_1) \models \phi_1
		\textup{ and }\rho(t+t') \models \phi_2,
	\end{align*}
	where $\PP_s$ is the projector $\op{s}{s} \otimes \id$ onto classical state $s$.
\end{definition}

From the semantics, we can see that
$\Phi_1 \ntl^{\I_1} (\Phi_2 \ntl^{\I_2} \Phi_3)$
and $(\Phi_1 \ntl^{\I_1} \Phi_2) \ntl^{\I_2} \Phi_3$ have different meanings:
The former formula requires there exist $t'\in \I_1$ and $t''\in \I_2$
such that $\forall\,t_1\in[t,t+t'): \rho(t_1) \models \Phi_1$,
$\forall\,t_2\in[t+t',t+t'+t''): \rho(t_2) \models \Phi_2$ and $\rho(t+t'+t'') \models \Phi_3$;
while the latter formula requires there exists a $t''\in \I_2$
such that $\rho(t+t'') \models \Phi_3$
and for each $t_1\in[t,t+t'')$, there exists a $t'\in \I_1$
such that $\forall\,t_1\in[t_1,t_1+t'): \rho(t_1) \models \Phi_1$
and $\rho(t_1+t') \models \Phi_2$.
We usually use the \emph{parse tree}
to clarify the structure of an STL formula $\phi$.

The logic is very generic.
STL has more expressive atomic propositions than LTL,
as $\true \equiv p(\x) \in (-\infty,+\infty)$ and $\false \equiv p(\x) \in \emptyset$.
CTL has a two-stage syntax consisting of state and path formulas.
Negation and conjunction are allowed in only state formulas, not path ones.
Whereas, STL allows negation and conjunction in any subformulas.
Besides the standard Boolean calculus,
we can easily obtain a few derivations:
$\Diamond^\I \phi \equiv  \true \ntl^\I \phi$,
$\Box^\I \phi \equiv \neg(\Diamond^\I \neg \phi)$,
and $\phi_1 \mathrm{R}^\I \phi_2 \equiv \neg (\neg\phi_1 \ntl^\I \neg\phi_2)$
where $\mathrm{R}$ is the release operator.
	
\begin{example}\label{ex2}
    Consider the open quantum walk $\QC_1$ described in Example~\ref{ex1}.
    It is not hard to get
    the probabilities of the particle staying respectively in $s_{00},s_{01},s_{10},s_{11}$ as follows:
	\begin{align*}
		x_{0,0} &= \tr(\PP_{s_{0,0}}(\rho(t)))
			= \tfrac{1}{2}\exp(-\tfrac{2+\sqrt{2}}{2}t)
			+ \tfrac{1}{2}\exp(-\tfrac{2-\sqrt{2}}{2}t), \\
		x_{0,1} &= \tr(\PP_{s_{0,1}}(\rho(t)))
			= -\tfrac{\sqrt{2}}{4}\exp(-\tfrac{2+\sqrt{2}}{2}t)
			+ \tfrac{\sqrt{2}}{4}\exp(-\tfrac{2-\sqrt{2}}{2}t) \\
			&\quad + \tfrac{1}{4}[\exp(-\tfrac{3}{2}t)-\exp(-\tfrac{1}{2}t)]\cos(\tfrac{1}{2}t) 
			+\tfrac{1}{4}[\exp(-\tfrac{3}{2}t)+\exp(-\tfrac{1}{2}t)]\sin(\tfrac{1}{2}t), \\
		x_{1,0} &= \tr(\PP_{s_{1,0}}(\rho(t)))
			= -\tfrac{\sqrt{2}}{4}\exp(-\tfrac{2+\sqrt{2}}{2}t)
			+\tfrac{\sqrt{2}}{4}\exp(-\tfrac{2-\sqrt{2}}{2}t) \\
			&\quad -\tfrac{1}{4}[\exp(-\tfrac{3}{2}t)-\exp(-\tfrac{1}{2}t)]\cos(\tfrac{1}{2}t)
			- \tfrac{1}{4}[\exp(-\tfrac{3}{2}t)+\exp(-\tfrac{1}{2}t)]\sin(\tfrac{1}{2}t), \\
		x_{1,1} &= \tr(\PP_{s_{1,1}}(\rho(t)))
    	    = 1	+\tfrac{-1+\sqrt{2}}{2}\exp(-\tfrac{2+\sqrt{2}}{2}t)
    	    -\tfrac{1+\sqrt{2}}{2}\exp(-\tfrac{2-\sqrt{2}}{2}t).
	\end{align*}
    We can see that
    the probability of exiting via $s_{11}$ would be one as $t$ approaches infinity.
    A further question is asking
    how fast the particle would reach the exit.
    This is actually a question about convergence performance.
    It is worth monitoring the critical moment
    at which the probability of exiting via $s_{11}$
    equals that of staying at transient positions $s_{01},s_{10}$.
    A requirement to be studied is like the following property.
	\begin{quote}
		\textbf{Property A:} At each moment $t\in[0,5]$,
		whenever the probability of staying at $s_{01}$ or $s_{10}$
		is greater than $\tfrac{1}{5}$,
		the probability of exiting via $s_{11}$ would overweight
		than that of staying at $s_{01}$ or $s_{10}$
		within the coming one unit of time.
	\end{quote}
	We formally specify it with the rich STL formula
	\[
		\phi_1 \equiv \Box^{\I_1}\, (\Phi_1 \rightarrow \Diamond^{\I_2} \Phi_2)
		\equiv \neg(\true\,\ntl^{\I_1}\, (\Phi_1 \wedge \neg(\true\,\ntl^{\I_2} \Phi_2))),
	\]
	where $\I_1=[0,5],\I_2=[0,1]$ are time intervals
	and $\Phi_1 \equiv x_{0,1}+x_{1,0}>\tfrac{1}{5},
	\Phi_2 \equiv x_{1,1} \ge x_{0,1}+x_{1,0}$ are atomic propositions,
	a.k.a. signals. \qed
\end{example}

\section{Solving Atomic Propositions}\label{S5}
As a basic step to decide the STL formula,
we need to solve the atomic proposition $\Phi$.
That is,
we will compute all solutions w.r.t. $t$, in which $\rho(t) \models \Phi$ holds.
We achieve it by a reduction to the real root isolation
for a class of real-valued functions, \emph{exponential polynomials}.
Although real roots of exponential polynomials have been studied
in many existing literature~\cite{AMW08,COW16,HLX+18},
the ones to be isolated in this paper involve the complicated complex exponents.
So we develop a state-of-the-art real root isolation for them,
whose completeness is established on Conjecture~\ref{Schanuel}.

Given an atomic proposition $\Phi\equiv p(\x) \in \II$
(assuming that $\II$ is bounded),
we would like to determine the algebraic structure of
\begin{equation}\label{eq:EP1}
	\varphi(t)=(p(\x(t))-\inf\II)(p(\x(t))-\sup\II),
\end{equation}
with which we will design an algorithm for solving $\Phi\equiv p(\x) \in \II$.
The structure of $\varphi(t)$ depends on that of $x_s(t)=\tr(\PP_s(\rho(t)))$.
We claim that
each entry of $\rho(t)$ is of the exponential polynomial form
\begin{equation}\label{eq:EP}
	\beta_1(t) \exp(\alpha_1 t) + \beta_2(t) \exp(\alpha_2 t) + \cdots
	+ \beta_m(t) \exp(\alpha_m t),
\end{equation}
where $\beta_1(t),\ldots,\beta_m(t)$ are nonzero $\mathbb{A}$-polynomials
and $\alpha_1,\ldots,\alpha_m$ are distinct algebraic numbers.
It follows the facts:
\begin{enumerate}
	\item The governing matrix $\M$
		in the state $\rho(t)=\vl(\exp(\M\cdot t)\cdot\lv(\rho(0)))$ of the QCTMC
		takes algebraic numbers  as entries. 
	\item The characteristic polynomial of $\M$ is an $\mathbb{A}$-polynomial.
	    The eigenvalues $\alpha_1,\ldots,\alpha_m$ of $\M$ are algebraic,
	    as they are roots of that $\mathbb{A}$-polynomial by Lemma~\ref{lem:closed}.
		Those eigenvalues make up all exponents in~\eqref{eq:EP}.
	\item The entries of the matrix exponential $\exp(\M\cdot t)$ are in the form~\eqref{eq:EP}.
\end{enumerate}
The same structure holds for $x_s(t)$,
as $x_s(t)=\tr(\PP_s(\rho(t)))$ is simply a sum of some entries of $\rho(t)$.
Furthermore, $\varphi(t)$ is also of the exponential polynomial form~\eqref{eq:EP},
since it is a $\mathbb{Q}$-polynomial in $\x(t)=(x_s(t))_{s \in S}$.
If $\II$ is unbounded from below (resp.~above),
the left (resp.~right) factor could be removed from~\eqref{eq:EP1} for further consideration.

Next, we will isolate all real roots $\lambda_1,\ldots,\lambda_n$ of $\varphi(t)$
in a bounded interval $\mathcal{B}$ (to be specified in the next section).
Before stating the core isolation algorithm---Algorithm~\ref{isol},
an overview of the isolation procedure is provided in Fig.~\ref{flow}.
The instances to be treated can be roughly divided into two classes: 
one is trivial that can be solved by the classical methods, e.g.,~\cite{CoL76},
for ordinary polynomials; 
the other is nontrivial that can be solved by Algorithm~\ref{isol}
but should meet three requirements of the input in Algorithm~\ref{isol}.
After the preprocesses \textbf{Basis Finding}, \textbf{Polynomialisation} and \textbf{Factoring},
the two classes of instances can be separated and solved by the corresponding methods.

\begin{figure}
	\includegraphics[scale=0.4]{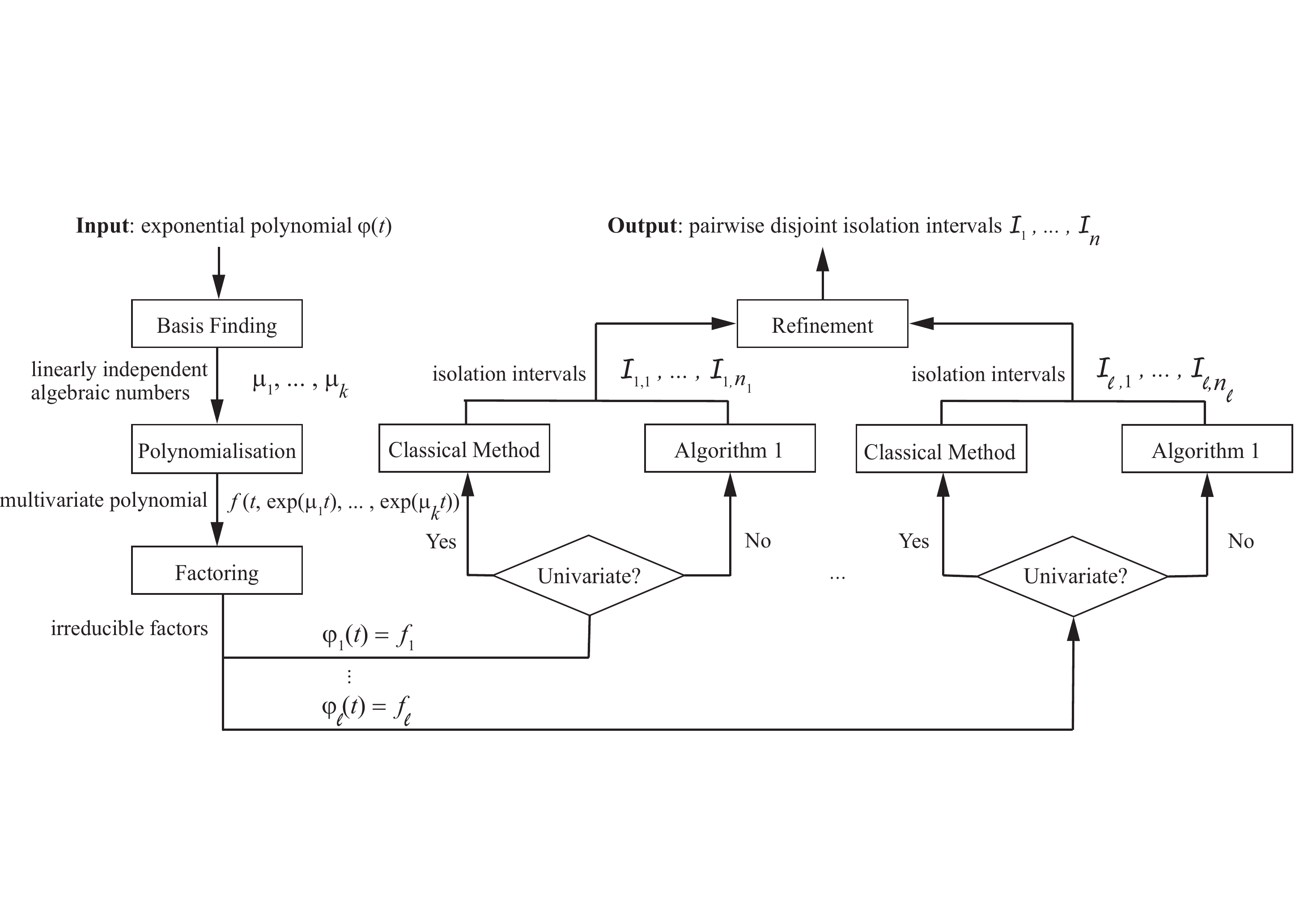}
	\caption{Flow chart of the whole isolation procedure}
	\label{flow}
\end{figure}

The technical details along the flow chart in Fig.~\ref{flow} are described below.
\begin{itemize}
	\item \textbf{Basis Finding}
	For a given set of algebraic numbers $\{\alpha_1,\ldots,\alpha_m\}$
	extracted from the exponents of the input exponential polynomial $\varphi(t)$,
	we can compute a simple extension
	$\mathbb{Q}(\mu):\mathbb{Q}=\mathbb{Q}(\alpha_1,\ldots,\alpha_m):\mathbb{Q}$
	by Lemma~\ref{lem:simple},
	such that each $\alpha_i=q_i(\mu)$ with $q_i\in \mathbb{Z}[x]$ and $\deg(q_i)<\deg(\mu)$;
	and further construct a $\mathbb{Q}$-linearly independent basis $\{\mu_1,\ldots,\mu_k\}$
	of those exponents $\{\alpha_1,\ldots,\alpha_m\}$ by~\cite[Section~3]{HLX+18},
	such that each $\alpha_i$ can be $\mathbb{Z}^+$-linearly expressed by $\{\mu_1,\ldots,\mu_k\}$.
	\item \textbf{Polynomialisation}
	Thus we can get a polynomial representation
	$f(t,\exp(\mu_1 t),\ldots,\linebreak[0]\exp(\mu_k t))$
	of $\varphi(t)$, where $f$ is a $(k+1)$-variate polynomial with algebraic coefficients.
	That is, $\varphi(t)$ is obtained by
	substituting $t,\exp(\mu_1 t),\ldots,\exp(\mu_k t)$ (as $k+1$ variables) into $f$.
	\item \textbf{Factoring}
	Factoring $\varphi(t)$ into irreducible factors $\varphi_i(t)$ ($1 \le i \le \ell$)
	corresponds to
	factoring the $(k+1)$-variate $\mathbb{A}$-polynomial $f$
	into irreducible factors $f_i$ ($1 \le i \le \ell$),
	which has been implemented in polynomial time, e.g.~\cite{Kal85}.
	\item \textbf{Univariate?}
	If the irreducible exponential polynomial $\varphi_i(t)$ corresponds
	a univariate polynomial $f_i$,
	isolating the real roots of $\varphi_i(t)$ can be treated by classical methods~\cite{CoL76};
	otherwise we will resort to Algorithm~\ref{isol},
	for which we can infer:
	\begin{enumerate}
		\item The exponential polynomial $\varphi(t)$ in the form~\eqref{eq:EP}
	    is plainly an \emph{analytic} function that is infinitely differentiable;
	    and it is a real-valued function,
	    or equivalently the imaginary part of $\varphi(t)$ is identically zero,
	    since each variable $x_s(t)$ is exactly the real-valued function $\tr(\PP_s \rho(t))$.
	    The same holds for any factor $\varphi_i(t)$ of $\varphi(t)$,
	    which ensures the first requirement of the input in Algorithm~\ref{isol}.
        \item By Theorem~\ref{Lindemann},
        thanks to the irreducibility of $\varphi_i(t)$,
	    we have $\varphi_i(\lambda) \ne 0$ holds for any $\lambda\in\mathbb{A}\setminus\{0\}$,
	    which ensures the second requirement of the input in Algorithm~\ref{isol}. 
	    \item By Conjucture~\ref{Schanuel} and Corollary~\ref{cor:common},
        each irreducible factor $\varphi_i(t)$ and its derivative $\varphi_i'(t)$
		are co-prime,
		and thus have no common real root except for $0$,
		which ensures the last requirement of the input in Algorithm~\ref{isol}.
	\end{enumerate}
	\item \textbf{Refinement}
		After performing Algorithm~\ref{isol}
		with each individual irreducible factor $\varphi_i(t)$ of $\varphi(t)$,
		we would obtain a list of disjoint isolation intervals $\I_{i,1},\ldots,\I_{i,n_i}$.
		The isolation intervals of different irreducible factors may be overlapping.
		However, by Corollary~\ref{cor:common} again,
		we have that
		each pair of co-prime factors of $\varphi(t)$ has no common real root except for $0$.
        So all these isolation intervals $\I_{i,1},\ldots,\I_{i,n_i}$ ($1 \le i \le \ell$)
		can be further refined to be pairwise disjoint.
		That would be the complete list of isolation intervals $\I_1,\ldots,\I_n$ for $\varphi(t)$,
		and thereby completes the whole isolation procedure.
\end{itemize}

\begin{algorithm}[ht!]
	\caption{\textsf{Real Root Isolation for a Real-valued Function}}\label{isol}
	\begin{algorithmic}[1]
		\item[] $$\{\I_1,\ldots,\I_n\} \Leftarrow {\sf Isolate}(\varphi,\I)$$
		\Require $\varphi(t)$ is a real-valued function
			defined on a rational interval $\I=[l,u]$,
			satisfying:
			\begin{enumerate}
			    \item $\varphi(t)$ is twice-differentiable,
				\item $\varphi(t)$ has no rational root in $\I$, and
				\item $\varphi(t)$ and $\varphi'(t)$ have no common real root in $\I$.
			\end{enumerate}
		\Ensure $\I_1,\ldots,\I_n$ are finitely many disjoint intervals,
			such that each contains exactly one real root of $\varphi$ in $\I$,
			and together contain all.
		\State compute an upper bound $M$ of $\{|\varphi'(t)|\,:\,t\in\I\}$;
		\State compute an upper bound $M'$ of $\{|\varphi''(t)|\,:\,t\in\I\}$;
		\State $i \gets 0$, $N \gets 2$ and $\delta \gets (u-l)/N$;
		\Comment{Here $N>1$ is a free parameter to indicate the number of subintervals to be split.
			We predefine it simply as $2$.}
		\While{$i \le N$}
		\If{$|\varphi(l+i\delta)|>M\delta$}\label{ln:unsat1}
			$i \gets i+1$;
		\Comment{$\varphi$ has no local real root}
		\ElsIf{$|\varphi(l+i\delta+\delta)|>M\delta$}\label{ln:unsat2}
			$i \gets i+2$;
		\Comment{$\varphi$ has no local real root}
		\ElsIf{$|\varphi'(l+i\delta)| \ge M'\delta$}\label{ln:dec1}
		\Comment{$\varphi$ is locally monotonic}
		\If{$\varphi(l+i\delta)\varphi(l+i\delta+\delta)<0$}\label{ln:sat1}
			\textbf{output} $(l+i\delta,l+i\delta+\delta)$;
		\EndIf
		\State $i \gets i+1$;
		\ElsIf{$|\varphi'(l+i\delta+\delta)| \ge M'\delta$}\label{ln:dec2}
		\Comment{$\varphi$ is locally monotonic}
		\If{$\varphi(l+i\delta)\varphi(l+i\delta+2\delta)<0$}\label{ln:sat2}
			\textbf{output} $(l+i\delta,l+i\delta+2\delta)$;
		\EndIf
		\State $i \gets i+2$;
		\Else
		\State\label{ln:rec} ${\sf Isolate}(\varphi,[l+i\delta,l+i\delta+\delta])$;
		\State $i \gets i+1$.
		\EndIf
		\EndWhile
	\end{algorithmic}
\end{algorithm}


\subparagraph*{Soundness of Algorithm~\ref{isol}}
We justify three groups of treatment in the loop body in turn.
\begin{enumerate}
	\item If the condition in Line~\ref{ln:unsat1} (resp.~\ref{ln:unsat2}) holds,
		$\varphi$ has no real root in the neighborhood
		centered at $l+i\delta$ (resp.~$l+(i+1)\delta$) with radius $\delta$.
		So we exclude the neighborhood.
	\item If the condition in Line~\ref{ln:dec1} (resp.~\ref{ln:dec2}) holds,
		$\varphi$ is monotonic in the neighborhood
		centered at $l+i\delta$ (resp.~$l+(i+1)\delta$) with radius $\delta$.
		Then, if the condition in Line~\ref{ln:sat1} (resp.~\ref{ln:sat2}) holds,
		i.e. $\varphi$ has different signs at endpoints of the neighborhood,
		the unique real root in the neighborhood exists and should be output;
		otherwise we just exclude the neighborhood.
	\item If it is not in the two decisive cases listed above,
		we perform Algorithm~\ref{isol} recursively
		in a subinterval $[l+i\delta,l+(i+1)\delta]$ of $[l,u]$. \qed
\end{enumerate}
	
	
\subparagraph*{Completeness of Algorithm~\ref{isol}}
The termination of Algorithm~\ref{isol} entails the completeness.
In other words, it suffices to show that
for any real-valued function $\varphi$ that satisfies the requirements,
Algorithm~\ref{isol} can always output all real roots of $\varphi$
within a finitely many times of recursion.
Since $\varphi$ and $\varphi'$ are real-valued functions defined
on the closed and bounded interval $\I$
and they have no common real root in $\I$,
there is a positive constant $\tau$,
such that either $|\varphi| \ge \tau$ or $|\varphi'| \ge \tau$ holds everywhere of $\I$.
Then, at any point $c$ in $\I$,
we can get a neighborhood with constant radius $rad:=\min(\tau/M,\tau/M')$,
in which either $\varphi$ has no real root or is monotonic.
So the two decisive cases must take place,
provided that the subinterval has length not greater than $rad$.
It implies that the recursion depth of Algorithm~\ref{isol} is bounded
by $\lceil \log_2(\|\I\|/rad) \rceil$, where $\|\I\| = \sup\I-\inf\I$.
Hence the termination is guaranteed. \qed

After obtaining all real roots $\lambda_1,\ldots,\lambda_n$ of $\varphi(t)$
in the bounded interval $\mathcal{B}$ by Algorithm~\ref{isol},
we have that
on each interval $\J_i$ ($0 \le i \le n$) of the $n+1$ intervals
in $\mathcal{B}\setminus\{\lambda_1,\ldots,\lambda_n\}$:
\begin{equation}\label{eq:interval}
	[\inf\mathcal{B},\lambda_1),(\lambda_1,\lambda_2),\ldots,
	(\lambda_{n-1},\lambda_n),(\lambda_n,\sup\mathcal{B}],
\end{equation}
$p(\x(t)) \in \II$ holds everywhere of $t\in \J_i$ or nowhere.
Finally, we can obtain the desired solution set $ \JJ$ (to be used in the next section)
of $p(\x(t)) \in \II$ by a finite union as follows
\begin{equation}
   \bigcup_{\substack{0 \le i \le n \\ p(\x(\J_i)) \subseteq \II}} \J_i  \cup
    \bigcup_{\substack{1 \le i \le n \\ p(\x(\lambda_i)) \in \II}} \{\lambda_i\}.
\end{equation}

\begin{example}\label{ex3}
Reconsider Example~\ref{ex2}. 
The exponential polynomial extracted from $\Phi_1$ is
\[
	\varphi_1(t) = x_{0,1}(t) + x_{1,0}(t) - \tfrac{1}{5}
	= -\tfrac{1}{5}	- \tfrac{\sqrt{2}}{2}\exp(-\tfrac{2+\sqrt{2}}{2}t)
	+ \tfrac{\sqrt{2}}{2}\exp(-\tfrac{2-\sqrt{2}}{2}t),
\]
and the exponential polynomial extracted from $\Phi_2$ is
\[
	\varphi_2(t) = x_{1,1}(t) - x_{0,1}(t) - x_{1,0}(t) 
	= 1+(\sqrt{2}-\tfrac{1}{2})\exp(-\tfrac{2+\sqrt{2}}{2}t)
	- (\sqrt{2}+\tfrac{1}{2})\exp(-\tfrac{2-\sqrt{2}}{2}t).
\] 
To solve $\Phi_1$ and $\Phi_2$ in a bounded interval, say $\mathcal{B}=[0,6]$,
we need to determine the real roots of $\varphi_1$ and $\varphi_2$.
Since both $\varphi_1$ and $\varphi_2$ are irreducible
and their corresponding polynomial representations are bivariate,
they have no rational root, repeated real root, nor common real root. 
After invoking Algorithm~\ref{isol} on $\varphi_1$ with $\mathcal{B}$,
we obtain two isolation intervals $[0,\tfrac{25}{64}]$
(containing real root $\lambda_1 \approx 0.352097$)
and $[\tfrac{275}{64},\tfrac{575}{128}]$ (containing $\lambda_2 \approx 4.49181$),
which could be easily refined up to any precision.
For $\varphi_2$,
as $\varphi_2(0) = 0$, 
we make a slight shift on the left endpoint of $\mathcal{B}$ in the consideration,
e.g., $[\tfrac{1}{1000},6]$.
After invoking Algorithm~\ref{isol} on $\varphi_2$ with $[\tfrac{1}{1000},6]$,
we could get the unique isolation intervals	$[\tfrac{125059}{64000},\tfrac{275117}{128000}]$,
which contains the real root $\lambda_3 \approx 2.14897$.
The three isolation intervals are pairwise disjoint.

Finally, we have that
the solution set of $\Phi_1$ is $(\lambda_1,\lambda_2)$ in $\mathcal{B}$,
and the solution set of $\Phi_2$ is $\{0\} \cup [\lambda_3,6]$. \qed
\end{example}

Under Conjecture~\ref{Schanuel}, we get the following computability result
while the complexity is still an open problem as left in existing literature~\cite{COW16,HLX+18}.
\begin{lemma}
	The atomic propositions in STL are solvable over QCTMCs.
\end{lemma}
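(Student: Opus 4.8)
The plan is to assemble the ingredients developed throughout this section into a single decision procedure. Fix an atomic proposition $\Phi \equiv p(\x) \in \II$ interpreted on a QCTMC $\QC$ with initial state $\rho(0)$, and treat first the case in which $\II$ is bounded (if $\II$ is unbounded from below or above, the corresponding factor is simply dropped from Eq.~\eqref{eq:EP1}). First I would form the real-valued function $\varphi(t)$ of Eq.~\eqref{eq:EP1} and recall, from the closed-form solution $\rho(t) = \vl(\exp(\M \cdot t) \cdot \lv(\rho(0)))$ together with the algebraicity of the entries of $\HH$, $\LL_j$ and $\rho(0)$, that $\varphi(t)$ has the exponential-polynomial shape~\eqref{eq:EP}: its exponents are the (algebraic, by Lemma~\ref{lem:closed}) eigenvalues of $\M$, and its coefficients are $\mathbb{A}$-polynomials in $t$, so $\varphi$ is an analytic real-valued function on any bounded interval $\mathcal{B}$.

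Next I would run the preprocessing pipeline. Using Lemma~\ref{lem:simple} I would compute a simple extension $\mathbb{Q}(\mu):\mathbb{Q}$ spanning all exponents of $\varphi$, extract a $\mathbb{Q}$-linearly independent basis $\{\mu_1,\ldots,\mu_k\}$ of those exponents, and rewrite $\varphi(t)$ as $f(t,\exp(\mu_1 t),\ldots,\exp(\mu_k t))$ for a $(k+1)$-variate $\mathbb{A}$-polynomial $f$. Factoring $f$ into irreducibles induces a factorisation $\varphi = \prod_i \varphi_i$ into irreducible exponential polynomials. For each factor whose polynomial representative is univariate, its real roots in $\mathcal{B}$ are isolated by a classical algorithm. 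For each genuinely multivariate $\varphi_i$, I would invoke Algorithm~\ref{isol} after checking its three input hypotheses: $\varphi_i$ is analytic, hence twice-differentiable; by Theorem~\ref{Lindemann} and irreducibility $\varphi_i$ has no nonzero algebraic root, in particular no nonzero rational root — a root at $t=0$ is removed beforehand by a small positive rational shift of the left endpoint of $\mathcal{B}$, as in Example~\ref{ex3}; and by Corollary~\ref{cor:common}, under Schanuel's conjecture, $\varphi_i$ and $\varphi_i'$ have no nonzero common real root. Soundness and termination of Algorithm~\ref{isol} were established above, so each invocation returns finitely many disjoint isolating intervals.

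Finally I would refine and conclude. Applying Corollary~\ref{cor:common} once more, distinct irreducible factors of $\varphi$ share no nonzero common real root, so the isolating intervals coming from all factors can be shrunk to be pairwise disjoint, yielding the complete increasing list $\lambda_1,\ldots,\lambda_n$ of real roots of $\varphi$ in $\mathcal{B}$. On each of the $n+1$ open pieces of $\mathcal{B}\setminus\{\lambda_1,\ldots,\lambda_n\}$ listed in~\eqref{eq:interval}, the truth value of $p(\x(t))\in\II$ is constant; hence evaluating $p(\x(\cdot))$ at one interior sample of each piece and at each $\lambda_i$ determines the solution set $\JJ$ exactly, which decides $\Phi$ over $\QC$.

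The main obstacle is the no-common-root hypothesis of Algorithm~\ref{isol}: excluding repeated real roots and tangencies between $\varphi_i$ and $\varphi_i'$ cannot be handled by elementary means, and it is exactly here — and again in the cross-factor disjointness of the refinement step — that Schanuel's conjecture enters through Corollary~\ref{cor:common}. A secondary subtlety is the careful handling of the root at $t=0$, which the exponential polynomials arising from density-operator traces frequently possess and which must be split off before Algorithm~\ref{isol} is applicable. As noted, this argument yields computability only; the complexity remains open.
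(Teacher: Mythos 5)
Your proposal is correct and follows essentially the same route as the paper: the lemma is proved there implicitly by the development of Section~\ref{S5} (reduction to the exponential polynomial $\varphi$ of Eq.~\eqref{eq:EP1}, the basis-finding/polynomialisation/factoring pipeline, verification of the three input hypotheses of Algorithm~\ref{isol} via Theorem~\ref{Lindemann} and Corollary~\ref{cor:common} under Conjecture~\ref{Schanuel}, and the refinement of the per-factor isolation intervals into a pairwise disjoint list from which the solution set $\JJ$ is read off), which is exactly what you have assembled. Your explicit handling of the possible root at $t=0$ by a small rational shift of the left endpoint matches the paper's treatment in Example~\ref{ex3}.
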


In fact, Conjecture~\ref{Schanuel} is a powerful tool
to treat roots of the general exponential polynomial.
For some special subclasses of exponential polynomials,
there are solid theorems to treat them:
one is Theorem~\ref{Lindemann} that has been employed~\cite{AMW08}
for the exponential polynomials in the ring $\mathbb{Q}[t,\exp(t)]$,
the other is the Gelfond--Schneider theorem employed~\cite[Subsection~4.1]{HLX+18}
for the exponential polynomials in $\mathbb{Q}[\exp(\mu_1 t),\exp(\mu_2 t)]$
where $\mu_1$ and $\mu_2$ are
\emph{two} $\mathbb{Q}$-linear independent \emph{real} algebraic numbers.
In Example~\ref{ex3}, the Gelfond--Schneider theorem is sufficient to
treat roots of the exponential polynomials $\varphi_1$ and $\varphi_2$,
thus the termination is guaranteed.
Algorithm~\ref{isol} can isolate roots of elements in
$\mathbb{Q}[t,\exp(\mu_1 t),\ldots,\exp(\mu_k t)]$
for \emph{arbitrarily many} $\mathbb{Q}$-linear independent
\emph{complex} algebraic numbers $\mu_1,\ldots,\mu_k$.
Additionally, Conjecture~\ref{Schanuel} has been employed
to isolate simple roots of more expressible functions
than exponential polynomials~\cite{Str08,Str09},
but fails to find repeated roots.
Hence, this paper makes a trade-off
between the expressibility of functions and the completeness of methodologies.

\section{Checking STL Formulas}\label{S6}
In the previous section, we have solved atomic propositions.
Now we consider the general STL formula $\phi$.
For a given formula $\phi$,
we compute the so-called \emph{post-monitoring period} $\mnt(\phi)$,
independent on the initial time $t_0$,
such that the truth of $\rho(t_0)\models\phi$ could be affected
by those $\rho(t)$ with $t \in [t_0,t_0+\mnt(\phi)]$.
Then we decide $\phi$ with a bottom-up fashion.
The complexity turns out to be linear in the size $\|\phi\|$ of the input STL formula $\phi$,
which is defined as the number of logic connectives in $\phi$ as standard.

Given an STL formula $\phi$,
we need to post-monitor a time period to decide the truth of $\rho(t_0)\models\phi$
at an initial time $t_0$,
especially for the until formula $\phi_1 \ntl^\I \phi_2$.
For example, according to the semantics of STL,
to decide $\rho(t_0)\models \phi_2$
with $\phi_2 \equiv \Phi_1 \ntl^{\I_1} (\Phi_2 \ntl^{\I_2} \Phi_3)$,
we have to monitor the states $\rho(t)$ from time $t_0$ to $t_0+\sup\I_1+\sup\I_2$.
It inspires us
to calculate the post-monitoring period $\mnt(\phi)$ as
\begin{equation}\label{mnt}
	\mnt(\phi)=\begin{cases}
		0 & \textup{if }\phi=\Phi, \\
		\mnt(\phi_1) & \textup{if }\phi=\neg\phi_1, \\
		\max(\mnt(\phi_1),\mnt(\phi_2))	& \textup{if }\phi=\phi_1 \wedge \phi_2, \\
		\sup\I+\max(\mnt(\phi_1),\mnt(\phi_2)) & \textup{if }\phi=\phi_1 \ntl^\I \phi_2.
	\end{cases}
\end{equation}

\begin{lemma}\label{lem:mnt}
	Given an STL formula $\phi$,
	the satisfaction $\rho(t_0) \models \phi$ is entirely determined by
	the states $\rho(t)$ with $t_0 \le t \le t_0+\mnt(\phi)$.
\end{lemma}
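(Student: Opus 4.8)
The plan is to prove the lemma by structural induction on the STL formula $\phi$, following the recursive definition of $\mnt(\phi)$ in Eq.~\eqref{mnt}. To state the claim cleanly, I would phrase it as: the truth of $\rho(t_0) \models \phi$ is \emph{determined by} the restriction of the trajectory to $[t_0, t_0+\mnt(\phi)]$, in the sense that whenever two QCTMC trajectories $\rho$ and $\tilde{\rho}$ satisfy $\rho(t)=\tilde{\rho}(t)$ for all $t\in[t_0,t_0+\mnt(\phi)]$, one has $\rho(t_0)\models\phi \Longleftrightarrow \tilde{\rho}(t_0)\models\phi$. First I would note that $\mnt(\phi)$ is a well-defined nonnegative real number, since every time interval $\I$ occurring in $\phi$ is finite and hence $\sup\I<\infty$; this is what keeps the monitoring window $[t_0,t_0+\mnt(\phi)]$ bounded.

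For the base case $\phi=\Phi$ an atomic proposition $p(\x)\in\II$, we have $\mnt(\phi)=0$, and by Definition~\ref{def:semantics} the relation $\rho(t_0)\models\Phi$ holds exactly when $p(\x)\in\II$ with $x_s=\tr(\PP_s(\rho(t_0)))$, which depends only on $\rho(t_0)$, i.e. on the trajectory restricted to the singleton window $[t_0,t_0]$. For the Boolean connectives the argument is immediate from the induction hypothesis: if $\phi=\neg\phi_1$ then $\mnt(\phi)=\mnt(\phi_1)$ and $\rho(t_0)\models\phi$ iff $\rho(t_0)\not\models\phi_1$, whose truth depends only on the trajectory on $[t_0,t_0+\mnt(\phi_1)]$; if $\phi=\phi_1\wedge\phi_2$ then $\mnt(\phi)=\max(\mnt(\phi_1),\mnt(\phi_2))$ and $\rho(t_0)\models\phi$ iff both conjuncts hold, and by hypothesis the $i$-th conjunct depends only on the trajectory on $[t_0,t_0+\mnt(\phi_i)]\subseteq[t_0,t_0+\mnt(\phi)]$.

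The only case needing care is the until formula $\phi=\phi_1 \ntl^\I \phi_2$, for which $\mnt(\phi)=\sup\I+\max(\mnt(\phi_1),\mnt(\phi_2))$. By Definition~\ref{def:semantics}, $\rho(t_0)\models\phi$ holds iff there is a real number $t'\in\I$ such that $\rho(t_1)\models\phi_1$ for every $t_1\in[t_0,t_0+t')$ and $\rho(t_0+t')\models\phi_2$. Since $t'\le\sup\I$, every such $t_1$ satisfies $t_1\le t_0+\sup\I$, so by the induction hypothesis the truth of $\rho(t_1)\models\phi_1$ is determined by the trajectory on $[t_1,t_1+\mnt(\phi_1)]\subseteq[t_0,\,t_0+\sup\I+\mnt(\phi_1)]\subseteq[t_0,t_0+\mnt(\phi)]$, and the truth of $\rho(t_0+t')\models\phi_2$ is determined by the trajectory on $[t_0+t',\,t_0+t'+\mnt(\phi_2)]\subseteq[t_0,t_0+\mnt(\phi)]$. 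Hence the entire existential statement over $t'\in\I$ refers only to the trajectory restricted to $[t_0,t_0+\mnt(\phi)]$, which closes the induction and yields the lemma.

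I expect the until case to be the only obstacle, and a mild one: the work is to track the nested time shift $t'$ together with the inductively guaranteed windows $\mnt(\phi_1)$ and $\mnt(\phi_2)$ and verify that their composition never exceeds $\sup\I+\max(\mnt(\phi_1),\mnt(\phi_2))$; the finiteness of $\I$ is exactly what is used at this step. A minor point worth stating explicitly is that, when $\I$ is open or half-open at its right endpoint, the witness $t'$ ranges only over values strictly below $\sup\I$, so taking $\sup\I$ as the shift bound is a safe over-estimate and no separate case analysis on endpoint types is needed.
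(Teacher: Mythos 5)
Your proposal is correct and follows essentially the same route as the paper: a structural induction on $\phi$ tracking the monitoring window given by Eq.~\eqref{mnt}, with the until case absorbing the shift $t'\le\sup\I$ into $\sup\I+\max(\mnt(\phi_1),\mnt(\phi_2))$. Your version is in fact slightly more rigorous than the paper's, since you make the phrase ``determined by'' precise via agreement of two trajectories on the window and explicitly verify the nested interval inclusions in the until case, which the paper only sketches.
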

\begin{proof}
	We discuss it upon the syntactical structure of the STL formula $\phi$.
	
	For the atomic proposition $\Phi$,
	$\rho(t_0) \models \Phi$ is plainly determined by $\rho(t_0)$.
	    	
	For the negation $\neg\phi_1$,
	if $\rho(t_0) \models \phi_1$
	is determined by $\rho(t)$ with $t_0 \le t \le t_0+\mnt(\phi_1)$,
	so is $\rho(t_0) \models \neg\phi_1$.
			
	For the conjunction $\phi_1 \wedge \phi_2$,
	if $\rho(t_0) \models \phi_1$ (resp.~$\rho(t_0) \models \phi_2$)
	is determined by $\rho(t)$ with $t_0 \le t \le t_0+\mnt(\phi_1)$
	(resp.~$t_0 \le t \le t_0+\mnt(\phi_2)$),
	$\rho(t_0) \models \phi_1 \wedge \phi_2$ is determined by the union of those states,
	i.e. $\rho(t)$ with $t_0 \le t \le t_0+\max(\mnt(\phi_1),\mnt(\phi_2))$.
			
	For the until formula $\phi_1 \ntl^\I \phi_2$,
	if $\rho(t_0) \models \phi_1$ (resp.~$\rho(t_0) \models \phi_2$)
	is determined by $\rho(t)$ with $t_0 \le t \le t_0+\mnt(\phi_1)$
	(resp.~$t_0 \le t \le t_0+\mnt(\phi_2)$),
	$\rho(t_0) \models \phi_1 \ntl^\I \phi_2$ is determined by
	$\rho(t)$ with $t_0 \le t \le t_0+\sup\I+\max(\mnt(\phi_1),\mnt(\phi_2))$,
	where $\sup\I$ is caused by the admissible transition at the latest time in the until formula,
	since then we have to determine $\rho(t) \models \phi_1$ from time $t_0$ to $t_0+\sup\I$
	and determine $\rho(t_0+\sup\I) \models \phi_2$.
\end{proof}
	
To decide $\rho(0) \models \phi$,
our method is based on the parse tree $\mathcal{T}$ of $\phi$ as follows.

Basically, for each leaf of $\mathcal{T}$
that represents an atomic proposition $\Phi$,
we compute the solution set $\JJ$ (possibly a union of maximal solution intervals $\J$)
of $\Phi$ within the monitoring interval $\mathcal{B}:=[0,\mnt(\phi)]$
by Algorithm~\ref{isol}.

Inductively, for each intermediate node of $\mathcal{T}$
that represents the subformula $\psi$ of $\phi$,
we tackle it into three classes.
\begin{itemize}
	\item If $\psi=\neg \phi_1$,
		supposing that $\JJ_1$ is the solution set of $\phi_1$,
		the solution set $\JJ'$ of $\psi$
		is $\mathcal{B} \setminus \JJ_1$.
	\item If $\psi=\phi_1 \wedge \phi_2$,
		supposing that $\JJ_1$ (resp.~$\JJ_2$) is the solution set of
		$\phi_1$ (resp.~$\phi_2$),
		the solution set $\JJ'$ of $\psi$ is $\JJ_1 \cap \JJ_2$.
	\item If $\psi=\phi_1 \ntl^\I \phi_2$,
		supposing that $\JJ_1$ (resp.~$\JJ_2$) is the solution set of
		$\phi_1$ (resp.~$\phi_2$),
		the solution set $\JJ'$ of $\psi$ is
		\[
		\{t : (t'\in\I) \wedge ([t,t+t') \subseteq \JJ_1 )\wedge (t+t'\in\JJ_2)\}.
		\]
		directly from the semantics of the until formula $\phi_1 \ntl^\I \phi_2$
		in Definition~\ref{def:semantics}, as
		\begin{itemize}
			\item $[t,t+t') \subseteq \JJ_1$
			if and only if $\forall\,t_1 \in [t,t+t') \,:\, \rho(t_1) \models \phi_1$, and 
			\item $t+t'\in\JJ_2$ if and only if $\rho(t+t') \models \phi_2$.
		\end{itemize}
    \end{itemize}
Note that the inductive steps of the above procedure
do not generally produce all solutions of the subformula $\psi$ in $\mathcal{B}$.
Since by Lemma~\ref{lem:mnt}
$\rho(t_0) \models \psi$ is entirely determined
by $\rho(t)$ with $t_0 \le t \le t_0+\mnt(\psi)$,
the resulting solution set $\JJ'$ contains all solutions of $\psi$ in $[0,\mnt(\phi)-\mnt(\psi)]$
and possibly misses some solutions in the right subinterval
$(\mnt(\phi)-\mnt(\psi),\mnt(\phi)]$.
Anyway, the subinterval $[0,\mnt(\phi)-\mnt(\psi)]$ has the left-closed endpoint $0$,
which suffices to decide $\rho(0) \models \phi$.

With a bottom-up fashion,
we could eventually get the solution set $\JJ$ of $\phi$,
by which $\rho(0) \models \phi$ can be decided to be true if and only if $0 \in \JJ$.
Overall, the procedure costs $\|\phi\|$ times of the interval operations
and at most $\|\phi\|$ times of calling Algorithm~\ref{isol}
for getting the solution set of an atomic proposition.
That is, the query complexity of model checking STL formulas is linear in $\|\phi\|$
by calling Algorithm~\ref{isol}.
The query complexity addresses the issue of the number of calls to a black box routine
with unknown complexities and is commonly used in quantum computing,
where the routine is usually called an oracle.
For example,
oracles can be a procedure of encoding an entry of a matrix into a quantum state~\cite{HHL09}
or a quantum circuit of preparing a specific quantum state~\cite{GWY21}.

\begin{example}
	For the STL formula
	$\phi_1 \equiv \neg(\true\,\ntl^{\I_1}\,(\Phi_1 \wedge \neg(\true\,\ntl^{\I_2} \Phi_2)))$,
	the post-monitoring period $\mnt(\phi_1)$ is $\sup\I_1+\sup\I_2=6$ by Eq.~\eqref{mnt},
	implying $\mathcal{B}=[0,6]$ as used in Example~\ref{ex3}.
	We construct the parse tree of $\phi_1$
	in Figure~\ref{fig:parse}.
	Based on it,
	we could calculate the solution sets of all nodes in a bottom-up fashion:
	\begin{itemize}
		\item Basically, the solution set for $\Phi_1$ is $(\lambda_1,\lambda_2)$
		where $\lambda_1 \approx 0.352097$ and $\lambda_2 \approx 4.49181$,
		the solution set for $\Phi_2$ is $\{0\}\cup [\lambda_3,6]$ where $\lambda_3 \approx 2.14897$,
		and the solution set for $\true$ is plainly $[0,6]$.
		The post-monitoring periods of the three leaves are $0$. 
		Since $\mathcal{B}$ is $[0,6]$, 
		we have got all the solutions in $[0,6]$.
		\item The solution set for $\true\,\ntl^{\I_2} \Phi_2$ is $[\lambda_3-1,6]$,
		and that for the negation is $[0,\lambda_3-1)$.
		The post-monitoring periods of the two nodes are $1$,
		thus we have got all solutions in $[0,5]$.
		\item The solution set for $\Phi_1 \wedge \neg(\true\,\ntl^{\I_2} \Phi_2)$
		is $(\lambda_1,\lambda_3-1)$.
		Its post-monitoring period is $1$, 
		thus we have got all solutions in $[0,5]$.
		\item Finally, the solution set for
		$\true\,\ntl^{\I_1}\, (\Phi_1 \wedge \neg(\true\,\ntl^{\I_2} \Phi_2))$ is $[0,\lambda_3-1)$,
		and that for the negation (the root, representing the whole STL formula $\phi_1$)
		is $[\lambda_3-1,6]$.
		Their post-monitoring periods are $6$,
		thus we have got the solution in $[0,0]$.
	\end{itemize}
    Since $0\notin [\lambda_3-1,6]$, 
    we can decide $\rho(0) \models \phi_1$ to be false.
    Hence the particle walking along Figure~\ref{fig:QW} does not satisfy
    the desired convergence performance---Property A. \qed
		
\begin{figure}[ht]
	\begin{tikzpicture}[->,>=stealth',auto,node distance=1.8cm,semithick,inner sep=2pt]
		\node[state,rectangle](s1){$\neg$};
		\node[state,rectangle](s2)[right of=s1]{$\ntl^{\I_1}$};
		\node[state,rectangle](s3)[right of=s2]{$\wedge$};
		\node[state,rectangle](s4)[below of=s3]{$\true$};
		\node[state,rectangle](s5)[right of=s3]{$\neg$};
		\node[state,rectangle](s6)[below of=s5]{$\Phi_1$};
		\node[state,rectangle](s7)[right of=s5]{$\ntl^{\I_2}$};
		\node[state,rectangle](s8)[right of=s7]{$\Phi_2$};
		\node[state,rectangle](s9)[below of=s8]{$\true$};
	
		\draw[->](s1)edge[]node{}(s2);
		\draw[->](s2)edge[]node{}(s3);
		\draw[->](s2)edge[]node{}(s4);
		\draw[->](s3)edge[]node{}(s5);
		\draw[->](s3)edge[]node{}(s6);
		\draw[->](s5)edge[]node{}(s7);
		\draw[->](s7)edge[]node{}(s8);
		\draw[->](s7)edge[]node{}(s9);
	\end{tikzpicture}
	\caption{Parse tree of the STL formula $\phi_1$}\label{fig:parse}
\end{figure}
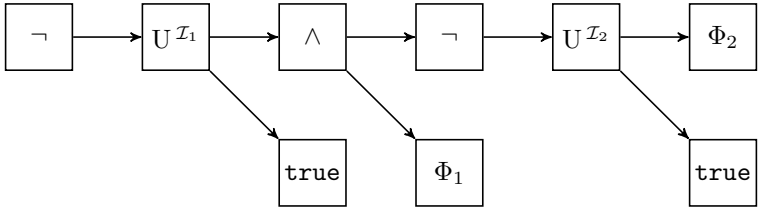

\end{example}

Finally, under Conjecture~\ref{Schanuel}, we obtain the main result:
\begin{theorem}
	The STL formulas are decidable over QCTMCs.
\end{theorem}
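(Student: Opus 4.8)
The plan is to assemble the two main ingredients of Sections~\ref{S5} and~\ref{S6}---the solvability of atomic propositions and the bottom-up interval computation---into a single decision procedure and to argue that it terminates with a correct answer. Fix an STL formula $\phi$ and a QCTMC $\QC$ with algebraic data. First I would compute the post-monitoring period $\mnt(\phi)$ by the structural recursion in Eq.~\eqref{mnt}; this is a finite rational number because the parse tree $\mathcal{T}$ of $\phi$ is finite and each time interval $\I$ occurring in $\phi$ is finite. Setting the monitoring interval $\mathcal{B}:=[0,\mnt(\phi)]$, Lemma~\ref{lem:mnt} guarantees that the truth of $\rho(t_0)\models\psi$, for every subformula $\psi$ of $\phi$ and every $t_0$ with $0 \le t_0 \le \mnt(\phi)-\mnt(\psi)$, is determined by the finitely representable trajectory $\{\rho(t):t\in\mathcal{B}\}$; in particular $\rho(0)\models\phi$ depends only on this trajectory.

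Second, I would process $\mathcal{T}$ bottom-up, maintaining at each node the solution set of the subformula it represents as an explicit finite union of (possibly degenerate) rational or algebraic intervals inside $\mathcal{B}$. For a leaf carrying an atomic proposition $\Phi\equiv p(\x)\in\II$, the solvability result for atomic propositions over QCTMCs (the development of Section~\ref{S5}: the preprocessing steps \textbf{Basis Finding}, \textbf{Polynomialisation}, \textbf{Factoring}, then Algorithm~\ref{isol} on each irreducible factor, then \textbf{Refinement}) produces the real roots $\lambda_1,\ldots,\lambda_n$ of the associated exponential polynomial $\varphi(t)$ of Eq.~\eqref{eq:EP1} in $\mathcal{B}$; partitioning $\mathcal{B}$ at these roots as in~\eqref{eq:interval} and testing one sample point per cell yields the solution set of $\Phi$ in closed form. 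It is the termination of Algorithm~\ref{isol} at this step that invokes Conjecture~\ref{Schanuel} (through Corollary~\ref{cor:common}), so the whole theorem is conditional on it. For an internal node I propagate solution sets exactly as prescribed in Section~\ref{S6}: complement within $\mathcal{B}$ for $\neg$, intersection for $\wedge$, and for $\phi_1\ntl^\I\phi_2$ the set $\{t : \exists\,t'\in\I,\ [t,t+t')\subseteq\JJ_1,\ t+t'\in\JJ_2\}$, where $\JJ_1,\JJ_2$ are the solution sets of the children.

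The key closure property to verify is that the class of finite unions of intervals (with rational or algebraic endpoints, comparisons among which are decidable) is closed under, and effectively computable through, all three operations. Complement and intersection are immediate. For the until operation, since the components of $\JJ_1$ are pairwise disjoint and hence separated, the condition $[t,t+t')\subseteq\JJ_1$ forces $[t,t+t')$ to lie inside a single component $C_1$ of $\JJ_1$, which (up to the degenerate case $t'=0$) is the pair of linear constraints $t\ge\inf C_1$ and $t+t'\le\sup C_1$; conjoining $t'\in\I$ and the membership $t+t'\in C_2$ for a component $C_2$ of $\JJ_2$, we obtain a region in the $(t,t')$-plane cut out by finitely many linear inequalities, whose projection onto the $t$-axis is a single interval. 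Taking the union over the finitely many pairs $(C_1,C_2)$ gives a finite union of intervals, computable by elementary case analysis on the finitely many endpoints involved. Iterating up $\mathcal{T}$ uses $\|\phi\|$ interval operations and at most $\|\phi\|$ calls to Algorithm~\ref{isol}, and finally yields the solution set $\JJ$ of $\phi$; by Lemma~\ref{lem:mnt} this set is correct at least on $[0,\mnt(\phi)-\mnt(\phi)]=\{0\}$, so $\rho(0)\models\phi$ holds if and only if $0\in\JJ$, which is decidable. This establishes the theorem.

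The main obstacle is not the bottom-up bookkeeping, which is routine once the interval closure property above is checked, but the completeness of the atomic step: one must ensure that Algorithm~\ref{isol} terminates on every irreducible multivariate factor $\varphi_i(t)$ arising from the model, and this is precisely where Conjecture~\ref{Schanuel}---via Corollary~\ref{cor:common}, which gives that $\varphi_i$ and $\varphi_i'$ have no common nonzero real root---together with Theorem~\ref{Lindemann}---which rules out nonzero algebraic roots of $\varphi_i$---are needed to discharge the three input requirements of Algorithm~\ref{isol}. A secondary subtlety is the treatment of $t=0$: exponential-polynomial factors may vanish there, so the working interval occasionally has to be shifted slightly (as in Example~\ref{ex3}) and the isolated point $0$ handled separately when forming a solution set; since the final query is only membership of $0$ in the root's solution set, and each relevant subinterval $[0,\mnt(\phi)-\mnt(\psi)]$ is left-closed at $0$, this causes no loss of information.
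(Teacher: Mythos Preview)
Your proposal is correct and follows essentially the same route as the paper: the theorem is stated there as the culmination of Sections~\ref{S5} and~\ref{S6} without a separate proof, and you have faithfully assembled those two ingredients, adding the (in the paper only implicit) verification that finite unions of intervals are effectively closed under the $\ntl^\I$ operation. One minor slip worth fixing: the interval endpoints $\lambda_i$ are in general \emph{transcendental} roots of exponential polynomials rather than algebraic numbers, but your parenthetical ``comparisons among which are decidable'' is the operative property and does hold via refinement of the isolation intervals (again under Conjecture~\ref{Schanuel} through Corollary~\ref{cor:common}).
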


\section{Concluding Remarks}\label{S7}
In this paper, we introduced the model of QCTMC that extends CTMC,
and established the decidability of the STL formulas over it.
To this goal, we firstly solved the atomic propositions in STL
by real root isolation of a wide class of exponential polynomials,
whose completeness was based on Schanuel's conjecture.
Then we decided the general STL formula using interval operations with a bottom-up fashion,
whose (query) complexity turns out to be linear in the size of the input formula
by calling the developed state-of-art real root isolation routine.
We demonstrated our method by a running example of an open quantum walk.

For future work,
we would like to explore the following aspects:
\begin{itemize}
	\item how to apply the proposed method to verify non-Markov models in the real world~\cite{Pel14};
    \item how to design an efficient numerical approximation of the exact method in this paper;
    \item and checking other formal logics, e.g.~\cite{ASS+96}, over the QCTMC.
\end{itemize}

\bibliography{ref}
	
		
\end{document}